\newtheorem{theorem}{Theorem}
\newtheorem{lemma}{Lemma}
\newtheorem{remark}{Remark}
\newtheorem{assumption}{Assumption}
\def\wh{\widehat}
\def\wt{\widetilde}
\newcommand{\RNum}[1]{\uppercase\expandafter{\romannumeral #1\relax}}
\begin{document}

\title{ Sparse logistic functional principal component analysis for binary data }

\author[a]{{\fontsize{12pt}{18pt}\selectfont Rou Zhong}}
\author[a]{{\fontsize{12pt}{0.5em}\selectfont Shishi Liu}}
\author[b]{{\fontsize{12pt}{0.5em}\selectfont Haocheng Li}}
\author[a]{{\fontsize{12pt}{0.5em}\selectfont Jingxiao Zhang} \thanks{zhjxiaoruc@163.com}}
\affil[a]{{\emph\fontsize{12pt}{0.5em}\selectfont Center for Applied Statistics, School of Statistics, Renmin University of China}}
\affil[b]{{\emph\fontsize{12pt}{0.5em}\selectfont Department of Mathematics and Statistics, University of Calgary}}
\date{}
\maketitle

\begin{abstract}

Functional binary datasets occur frequently in real practice, whereas discrete characteristics of the data can bring challenges to model estimation. In this paper, we propose a sparse logistic functional principal component analysis (SLFPCA) method to handle the functional binary data. The SLFPCA looks for local sparsity of the eigenfunctions to obtain convenience in interpretation. We formulate the problem through a penalized Bernoulli likelihood with both roughness penalty and sparseness penalty terms. An efficient algorithm is developed for the optimization of the penalized likelihood using majorization-minimization (MM) algorithm. The theoretical results indicate both consistency and sparsistency of the proposed method. We conduct a thorough numerical experiment to demonstrate the advantages of the SLFPCA approach. Our method is further applied to a physical activity dataset.

\textbf{Keywords}: Functional principal component analysis, penalized Bernoulli likelihood, binary data, local sparsity, MM algorithm

\end{abstract}

\section{ Introduction }

Functional principal component analysis (FPCA) is an indispensable tool in functional data analysis (FDA), for its utility in dimensionality reduction and variation mode exploration. A great many remarkable efforts have been put into FPCA, such as \citet{silverman1996smoothed}, \citet{cardot2000nonparametric}, \citet{james2000principal}, \citet{yao2005functional}, and \citet{hall2006on}, among others. In this article, we focus on functional data with binary outcomes and pursue sparsity of functional principal components (FPCs) for better interpretability.

FPCA for binary data without regard to the local sparsity has been studied by several researchers in the framework of exponential family. \citet{hall2008modelling} performed FPCA to non-Gaussian sparse longitudinal data by employing a latent Gaussian process (LGP) model with a known link function. \citet{linde2009a} considered a Bayesian FPCA approach for data from one-parameter exponential families. \citet{gertheiss2017a} conducted FPCA via a generalized additive mixed model to handle non-Gaussian cases, and established estimating procedures in both frequentist and Bayesian perspectives. \citet{li2018exponential} presented an exponential family functional principal component analysis (EFPCA) method that accommodates two-way non-Gaussian functional data. Admittedly, exploration of functional binary data is not sufficient as that of functional data from continuous distributions, for the obstacles caused by their discrete characteristics.

Nevertheless, FPCs obtained from the above methods and general FPCA approaches are almost non-zero on the whole observation interval, which increases the difficulties in interpreting the dominant variability source of the curves. Consequently, a few novel FPCA methods have been developed to gain FPCs with local sparsity, which means being strictly zero on some subintervals. \citet{chen2015localized} proposed a localized functional principal component analysis (LFPCA) method, in which they added an $L_1$ penalty on the discretized eigenfunction and constructed a deflated Fantope to estimate FPCs sequentially. An interpretable functional principal component analysis (iFPCA) method was introduced in \citet{lin2016interpretable}. They utilized an $L_0$ penalty and devised a greedy backward elimination algorithm to achieve approximate optimization. \citet{li2016supervised} made use of additional variables to incorporate some supervision information in the sparse functional principal component framework. \citet{nie2020sparse} established sparse FPCA methods taking advantage of means in functional regression. Additionally, \citet{wang2020hierarchical} and \citet{zhang2019interpretable} considered sparse FPCA in more complicated multivariate functional settings. The aforementioned techniques are only suitable for functional data from continuous distributions, thus cannot be used for functional binary data that abound in practice. Moreover, to the best of our knowledge, there has been no relevant work considering sparse FPCA for binary data in the literatures.

In this article, we propose a new sparse FPCA approach, called sparse logistic functional principal component analysis (SLFPCA), which can be applied to functional binary data under both dense and sparse designs. Specifically, a likelihood-based method is established for the proposed SLFPCA, inspired by sparse principal component analysis techniques for multivariate binary data \citep{lee2010sparse, lee2013a}.
Further, \citet{james2000principal} and \citet{zhou2008joint} executed FPCA by constructing an appropriate likelihood function for Gaussian data. Different from them, we introduce a penalized Bernoulli likelihood.
To meet the need for both optimal performance and high interpretability, two types of penalty, roughness penalty and sparseness penalty, are imposed correspondingly. The roughness penalty is commonly used in FDA to control the degree of smoothing for model fitting and obtain better estimation. And the sparseness penalty can be exerted to identify non-zero subintervals and contribute to more comprehensible and interpretable conclusions. However, it poses great challenges to optimize the penalized Bernoulli likelihood with roughness and sparseness penalties, for the reasons that the objective function is no more quadratic and the selection of tuning parameters must be taken into account. Here we employ the majorization-minimization (MM) algorithm, in which we define a simpler surrogate objective function iteratively. Moreover, Bayesian information criterion (BIC) \citep{schwarz1978estimating} is embedded in the computation for selecting tuning parameters.

Compared with the existing works, our contributions are three-fold. First, it is the first attempt to ponder sparse FPCA for binary data and the proposed SLFPCA is formulated as the optimization of a penalized Bernoulli likelihood. Second, we provide an innovative algorithm for the model fitting, which gives out satisfying estimating results. For the implementation, we also develop an R package \emph{SLFPCA}, which is available on \url{https://CRAN.R-project.org/package=SLFPCA}. Third, asymptotic properties on both consistency and sparsistency are established.

The paper is laid out as follows. In Section \ref{SecMethod}, we introduce the methodology, including the construction of penalized likelihood and computational details of the algorithm. Theoretical results are provided in Section \ref{SecTheory}. A numerical study is executed in Section \ref{SecSim} to assess the performance of our method. Section \ref{SecReal} shows a real data example on physical activity. We conclude this paper with some discussions in Section \ref{SecCon}.

\section{ Methodology }\label{SecMethod}

\subsection{Penalized Likelihood}

Consider a random process $Y(t), t \in \mathcal{T}$ with binary outcomes, where $\mathcal{T} = [0, T]$ is a bounded and closed interval. For a given time point $t$, assume that $Y(t)$ follows $\emph{Bernoulli(1, p(t))}$, where $p(t) = \Pr\{Y(t) = 1\}$. Let $X(t)$ denote the canonical parameter for $\emph{Bernoulli(1, p(t))}$, and $\{X(t), t \in \mathcal{T}\}$ is supposed to be a latent square integrable process with mean function $EX(t) = \mu(t)$ and covariance function $\Sigma(s, t) = \mbox{cov} \{X(s), X(t)\}$. Moreover, we have
\begin{align}
p(t) = \frac{\exp\{X(t)\}}{\exp\{X(t)\} + 1} \triangleq \pi\{X(t)\}. \nonumber
\end{align}
In practice, let $\{t_{ij}\mbox{:} \ i = 1,\ldots, n, j = 1, \ldots, m_i\}$ be the observation time points for $n$ independent subjects, each with $m_i$ measurements, and $\{y_{ij}\mbox{:} \ i = 1,\ldots, n, j = 1, \ldots, m_i\}$ be the corresponding observations. We further define $Y_{ij} = Y_i(t_{ij})$, where $Y_i(\cdot)$ is the random trajectory of the $i$-th subject, then $y_{ij}$ can be seen as a realization of the random variable $Y_{ij}$, thus
\begin{align}\label{PrYij}
\Pr(Y_{ij} = y_{ij}) = \pi(X_{ij})^{y_{ij}} \{1 - \pi(X_{ij})\}^{1 - y_{ij}} = \pi(q_{ij} X_{ij}),
\end{align}
where $X_{ij} = X_i(t_{ij})$ and $q_{ij} = 2 y_{ij} - 1$.

The latent process $X_i(t)$ admits the following Karhunen-Lo\`{e}ve expansion
\begin{align}\label{KL}
X_i(t) = \mu(t) + \sum_{k = 1}^{\infty} \xi_{ik} \phi_k(t),
\end{align}
where $\phi_k(t)$ is the $k$-th eigenfunction of $\Sigma(s, t)$ such that $\int_{\mathcal{T}} \phi_k^2(t) dt = 1$ while $\int_{\mathcal{T}} \phi_k(t) \phi_l(t) dt = 0$ for $l \neq k$, and $\xi_{ik}$ is the corresponding FPC score. In addition, let $\lambda_k$ be the $k$-th eigenvalue of $\Sigma(s, t)$, then $\xi_{ik}$, for $k \geq 1$, are uncorrelated random variables with $E \xi_{ik} = 0$ and $\mbox{var} (\xi_{ik}) = \lambda_k$. Considering the feasibility for practical estimation \citep{gervini2008robust, huang2014joint}, we adopt a reduced rank model
\begin{align}\label{rrmod}
X_i(t) = \mu(t) + \sum_{j = 1}^{p} \xi_{ik} \phi_k(t),
\end{align}
where $p$ is chosen in advance. In order to construct an appropriate likelihood function when the curves are not fully observed, we make use of the B-spline basis here. The reasons for selecting B-spline basis rather than other basis functions are discussed in Section \ref{SecSparPen}. Let $\{B_l(t), l = 1, \ldots, L\}$ be the B-spline basis functions on $\mathcal{T}$ with degree $d$ and knots $0 = \tau_0 < \tau_1 < \cdots < \tau_K < \tau_{K + 1} = T$, where $K$ is the number of interior knots, then we have $L = K + d + 1$. Let $\textbf{B}(t) = \{B_1(t), \ldots, B_L(t)\}^\top$. Therefore, $\mu(t)$ and $\phi_k(t)$ can be expressed as
\begin{align}
\mu(t) &= \textbf{B}(t)^\top \bm{\mu}, \nonumber \\
\phi_k(t) &= \textbf{B}(t)^\top \bm{\theta}_k, k = 1, \ldots, p, \nonumber
\end{align}
where $\bm{\mu}$ and $\bm{\theta}_k$ are the coefficients of the mean function and the $k$-th eigenfunction respectively. Let $\bm{\Theta}_{p \times L} = (\bm{\theta}_1, \cdots, \bm{\theta}_p)^\top$, then in term of (\ref{rrmod}),
\begin{align}
X_{ij} = X_i(t_{ij}) &= \textbf{B}(t_{ij})^\top \bm{\mu} + \textbf{B}(t_{ij})^\top \bm{\Theta}^\top \bm{\xi}_i \nonumber \\
&= \textbf{B}_{ij}^\top \bm{\mu} + \textbf{B}_{ij}^\top \bm{\Theta}^\top \bm{\xi}_i, \label{XBrep}
\end{align}
where $\textbf{B}_{ij} = \textbf{B}(t_{ij})$ and $\bm{\xi}_i = (\xi_{i1}, \ldots, \xi_{ip})^\top$. Combining (\ref{PrYij}) and (\ref{XBrep}), we obtain the log-likelihood function
\begin{align}
l(\bm{\mu}, \bm{\Theta}, \bm{\xi}) &= \sum_{i = 1}^n \sum_{j = 1}^{m_i} \log \pi (q_{ij} X_{ij}) \nonumber\\
&= \sum_{i = 1}^n \sum_{j = 1}^{m_i} \log \pi \{q_{ij} (\textbf{B}_{ij}^\top \bm{\mu} + \textbf{B}_{ij}^\top \bm{\Theta}^\top \bm{\xi}_i) \}, \label{loglik}
\end{align}
where $\bm{\xi}_{n \times p} = (\bm{\xi}_1, \ldots, \bm{\xi}_n)^\top$.

We next impose two types of structural regularization on the estimation in (\ref{loglik}). First, to alleviate the excessive variability of the estimated mean function and eigenfunctions, roughness penalty is exerted on $\mu(t)$ and $\phi_k(t)$ to control the degree of smoothing. We adopt the most common roughness penalties, $\int_{\mathcal{T}} \{ \mu^{(2)}(t) \}^2 dt$ and $\int_{\mathcal{T}} \{ \phi_k^{(2)}(t) \}^2 dt$, where $\mu^{(2)}(t)$ and $\phi_k^{(2)}(t)$ are the second derivatives of $\mu(t)$ and $\phi_k(t)$ respectively. Using B-spline basis, the roughness penalties are represented as
\begin{align}
\int_{\mathcal{T}} \{ \mu^{(2)}(t) \}^2 dt &= \bm{\mu}^\top \int_{\mathcal{T}} \textbf{B}^{(2)}(t) \textbf{B}^{(2)}(t)^\top dt \bm{\mu} = \bm{\mu}^\top V \bm{\mu}, \nonumber \\
\int_{\mathcal{T}} \{ \phi_k^{(2)}(t) \}^2 dt &= \bm{\theta}_k^\top \int_{\mathcal{T}} \textbf{B}^{(2)}(t) \textbf{B}^{(2)}(t)^\top dt \bm{\theta}_k = \bm{\theta}_k^\top V \bm{\theta}_k, \nonumber
\end{align}
where $\textbf{B}^{(2)}(t)$ is the second derivative of $\textbf{B}(t)$ and $V = \int_{\mathcal{T}} \textbf{B}^{(2)}(t) \textbf{B}^{(2)}(t)^\top dt$. Second, to enhance interpretability, we pursue eigenfunction estimates that reflet local sparsity. Hence, a sparseness penalty is also exerted on the eigenfunctions. The formulation of sparseness penalty is discussed at length in Section \ref{SecSparPen} and we denote it by $\mbox{PEN}_{\lambda}(\bm{\Theta})$ at present, where $\lambda$ is the tuning parameter that controls the level of sparseness. At last, the penalized likelihood method minimizes the following objective
\begin{align}\label{penloglik1}
-\sum_{i = 1}^n \sum_{j = 1}^{m_i} \log \pi \{q_{ij} (\textbf{B}_{ij}^\top \bm{\mu} &+ \textbf{B}_{ij}^\top \bm{\Theta}^\top \bm{\xi}_i) \} \nonumber \\
&+ N \kappa_{\bm{\mu}} \bm{\mu}^\top V \bm{\mu} + N \kappa_{\bm{\theta}} \sum_{k = 1}^p \bm{\theta}_k^\top V \bm{\theta}_k + N \mbox{PEN}_{\lambda}(\bm{\Theta}),
\end{align}
with respect to $\bm{\mu}, \bm{\Theta}$ and $\bm{\xi}$, where $N = \sum_{i = 1}^n m_i$, $\kappa_{\bm{\mu}}$ and $\kappa_{\bm{\theta}}$ are two tuning parameters. Note that for simplicity, we take same tuning parameters $\kappa_{\bm{\theta}}$ and $\lambda$ for all eigenfunctions.

\subsection{Sparseness Penalty}\label{SecSparPen}

We expect the estimated eigenfunctions to possess some local sparse features through the imposed sparseness penalty.
We generalize the functional SCAD penalty suggested in \citet{lin2017locally} to our FPCA framework. In specific,
\begin{align}\label{SparPen}
\mbox{PEN}_{\lambda}(\bm{\Theta}) &= \frac{K + 1}{8T} \sum_{k = 1}^{p} \int_{\mathcal{T}} p_{\lambda} (|\phi_k(t)|) dt \nonumber \\
&\approx \frac{1}{8} \sum_{k = 1}^{p} \sum_{m = 1}^{K + 1} p_{\lambda} \Bigg (\sqrt{\frac{K + 1}{T} \int_{\tau_{m - 1}}^{\tau_{m}} \phi_k^2(t) dt} \Bigg ),
\end{align}
where $p_{\lambda}(\cdot)$ is the SCAD function proposed in \citet{fan2001variable}, which is defined as
\begin{align}
p_{\lambda}(v) = \left\{ \begin{array}{rcl}
\lambda v &\mbox{when} &0 \leq v \leq \lambda \\
-\frac{v^2 - 2 a \lambda v + \lambda^2}{2(a - 1)} &\mbox{when} &\lambda < v < a \lambda \\
\frac{(a + 1) \lambda^2}{2} &\mbox{when} &v \geq a \lambda
\end{array} \right., \nonumber
\end{align}
where $a$ is chosen to be $3.7$ suggested by \citet{fan2001variable}. The local quadratic approximation is applied to (\ref{SparPen}). Specifically, for a given $v_0$ close to $v$, the local quadratic approximation can be expressed as $p_{\lambda}(|v|) \approx p_{\lambda}(|v_0|) + \frac{1}{2} \{p_{\lambda}^{'}(|v_0|)/|v_0|\}(v^2 - v_0^2)$. Further, substituting $\phi_k(t)$ with its basis representation, we finally have
\begin{align}
\mbox{PEN}_{\lambda}(\bm{\Theta}) \approx \frac{1}{8} \sum_{k = 1}^p \bm{\theta}_k^\top W_{\lambda, k} \bm{\theta}_k, \nonumber
\end{align}
with the constant term ignored, where
\begin{align}
W_{\lambda, k} = \frac{1}{2} \sum_{m = 1}^{K + 1} \Bigg \{ \frac{p_{\lambda}^{'} \big (\sqrt{\frac{K + 1}{T} \int_{\tau_{m - 1}}^{\tau_{m}} \phi_{k,0}^{2}(t) dt} \big )}{\sqrt{\frac{T}{K + 1} \int_{\tau_{m - 1}}^{\tau_{m}} \phi_{k, 0}^2(t) dt}} V_m \Bigg \}, \nonumber
\end{align}
with $V_m = \int_{\tau_{m - 1}}^{\tau_{m}} \textbf{B}(t) \textbf{B}(t)^\top dt$ and $\phi_{k, 0}(t)$ close to $\phi_k(t)$. In the iterative procedure, $\phi_{k, 0}(t)$ is replaced by the initial values or the estimates obtained from previous iteration. Hence, we aim to minimize
\begin{align}\label{penloglik2}
-\sum_{i = 1}^n \sum_{j = 1}^{m_i} \log \pi \{q_{ij} (&\textbf{B}_{ij}^\top \bm{\mu} + \textbf{B}_{ij}^\top \bm{\Theta}^\top \bm{\xi}_i) \} \nonumber \\
&+ N \kappa_{\bm{\mu}} \bm{\mu}^\top V \bm{\mu} + N \kappa_{\bm{\theta}} \sum_{k = 1}^p \bm{\theta}_k^\top V \bm{\theta}_k + \frac{N}{8} \sum_{k = 1}^p \bm{\theta}_k^\top W_{\lambda, k} \bm{\theta}_k
\end{align}
in the computation. More details are provided in Section \ref{SecAlgor}.

We complete this subsection with a discussion on the reasons for the choice of B-spline basis. From the sparseness penalty (\ref{SparPen}), we constraint the magnitude of eigenfunctions via each subinterval, that means exerting localized regularization to capture particular local features. Through basis representations, we transfer the penalization to the basis coefficients. For a general basis system, a set of sparse coefficients does not necessarily generate a function with local sparse feature, which may make trouble in the computation. On the contrary, B-spline basis enjoys the compact support property \citep{ramsay2005functional}, which elucidates that the basis is non-zero over no more than $d + 1$ adjacent subintervals. Therefore, consecutive $d + 1$ zero-valued basis coefficients indicate the resulting function being zero-valued on certain interval. This outstanding property of B-spline basis makes it crucial for the work on local sparsity, see \citet{zhou2013functional}, \citet{wang2015functional}, \citet{lin2017locally}, and \citet{tu2020estimation}.

\subsection{Algorithm}\label{SecAlgor}

The minimization of (\ref{penloglik1}) or (\ref{penloglik2}) is a tough task for the complicated expression of their first term. Thus, we first apply the MM algorithm to obtain a sequence of surrogate objective functions, which are simple enough for computation. For function $\pi(v)$, we have
\begin{align}
-\log \pi (v) \leq -\log \pi (v_0) + \frac{1}{8} [v - v_0 - 4 \{ 1 - \pi (v_{0})\}]^2, \nonumber
\end{align}
for any $v_0$ \citep{lee2010sparse}. Then the upper bound of $-\log \pi (q_{ij} X_{ij})$ can be achieved by
\begin{align}
-\log \pi (q_{ij} X_{ij}) \leq -\log \pi (q_{ij} X_{ij, 0}) + \frac{1}{8} (X_{ij} - z_{ij, 0})^2, \nonumber
\end{align}
where $X_{ij, 0}$ can be the initial value or be obtained from the last iteration, and $z_{ij, 0} = X_{ij, 0} + 4 q_{ij} \{ 1 - \pi(q_{ij} X_{ij, 0}) \}$. As the constant $-\log \pi (q_{ij} X_{ij, 0})$ has no effect on the optimization, the surrogate objective function can be written as
\begin{align}\label{surrloglik}
\sum_{i = 1}^n \sum_{j = 1}^{m_i} \{z_{ij, 0} - (\textbf{B}_{ij}^\top \bm{\mu} &+ \textbf{B}_{ij}^\top \bm{\Theta}^\top \bm{\xi}_i)\}^2 \nonumber \\
&+ N \kappa_{\bm{\mu}} \bm{\mu}^\top V \bm{\mu} + N \kappa_{\bm{\theta}} \sum_{k = 1}^p \bm{\theta}_k^\top V \bm{\theta}_k + N \sum_{k = 1}^p \bm{\theta}_k^\top W_{\lambda, k} \bm{\theta}_k,
\end{align}
where the extra multiplier $8$ can be absorbed into the tuning parameters $\kappa_{\bm{\mu}}$ and $\kappa_{\bm{\theta}}$ in the second and third terms respectively.

To optimize (\ref{surrloglik}), we consider the minimizations with respect to $\bm{\mu}, \bm{\xi}$ and $\bm{\Theta}$ sequentially. First, for fixed $\bm{\xi}$ and $\bm{\Theta}$, let $\wt{z}_{ij} = z_{ij, 0} - \textbf{B}_{ij}^\top \bm{\Theta}^\top \bm{\xi}_i$. Then we have
\begin{align}
\wh{\bm{\mu}} &= \mathop{\arg\min}\limits_{\bm{\mu}} \sum_{i = 1}^n \sum_{j = 1}^{m_i} \{\wt{z}_{ij} - \textbf{B}_{ij}^\top \bm{\mu} \}^2 + N \kappa_{\bm{\mu}} \bm{\mu}^\top V \bm{\mu} \nonumber \\
&= (\textbf{B} \textbf{B}^\top + N \kappa_{\bm{\mu}}V)^{-1} \textbf{B}^\top \wt{\textbf{Z}}, \label{muest}
\end{align}
where $\textbf{B} = (\textbf{B}_{11} \cdots \textbf{B}_{1m_1} \cdots \textbf{B}_{n m_n})^\top$ and $\wt{\textbf{Z}} = (\wt{z}_{11} \cdots \wt{z}_{1m_1} \cdots \wt{z}_{nm_n})^\top$. Second, we estimate $\bm{\xi}_k$ and $\bm{\theta}_k$ iteratively. Specifically, given $\bm{\xi}_l$ and $\bm{\theta}_l$ for $l \neq k$, define $\bar{z}_{ij} = z_{ij, 0} - \textbf{B}_{ij}^\top \wh{\bm{\mu}} - \textbf{B}_{ij}^\top \sum_{l \neq k} \xi_{ij} \bm{\theta}_l$. Subsequently, $\wh{\xi}_{ik}$ also has an explicit expression
\begin{align}
\wh{\xi}_{ik} &= \mathop{\arg\min}\limits_{\xi_{ik}} \sum_{i = 1}^n \sum_{j = 1}^{m_i} \{ \bar{z}_{ij} - \textbf{B}_{ij}^\top \bm{\theta}_k \xi_{ik} \}^2 \nonumber \\
&= \mathop{\arg\min}\limits_{\xi_{ik}} \sum_{j = 1}^{m_i} \{ \bar{z}_{ij} - \textbf{B}_{ij}^\top \bm{\theta}_k \xi_{ik} \}^2 \nonumber \\
&= \frac{\sum_{j = 1}^{m_i} \textbf{B}_{ij}^\top \bm{\theta}_k \bar{z}_{ij}}{\sum_{j = 1}^{m_i} (\textbf{B}_{ij}^\top \bm{\theta}_k)^2}. \label{xiest}
\end{align}
Then $\wh{\bm{\xi}}_k = (\wh{\xi}_{1k}, \ldots, \wh{\xi}_{nk})^\top$. On the other hand, the estimation of $\bm{\theta}_k$ is more complex for it involves the sparseness penalty and we construct a sub-iteration procedure for $\bm{\theta}_k$. The corresponding objective function can be written as
\begin{align}
\sum_{i = 1}^n \sum_{j = 1}^{m_i} \{ \bar{z}_{ij} - \xi_{ik} \textbf{B}_{ij}^\top \bm{\theta}_k \}^2 + N \kappa_{\bm{\theta}} \bm{\theta}_k^\top V \bm{\theta}_k + N \bm{\theta}_k^\top W_{\lambda, k} \bm{\theta}_k. \nonumber
\end{align}
If $W_{\lambda, k}$ is known, we have
\begin{align}
\wh{\bm{\theta}}_k = (U^\top U + N \kappa_{\bm{\theta}} V + N W_{\lambda, k})^{-1} U^\top \bar{\textbf{Z}}, \label{thetaest}
\end{align}
where $U = (\xi_{1k} \textbf{B}_{11} \cdots \xi_{1k} \textbf{B}_{1m_1} \cdots \xi_{nk} \textbf{B}_{nm_n})^\top$ and $\bar{\textbf{Z}} = (\bar{z}_{11} \cdots \bar{z}_{1m_1} \cdots \bar{z}_{nm_n})^\top$. As $W_{\lambda, k}$ depends on the value of $\bm{\theta}_k$, we update $W_{\lambda, k}$ using the new estimated $\bm{\theta}_k$ until convergence.
The appearance of small elements in $\wh{\bm{\theta}}_k$ may make $U^\top U + N \kappa_{\bm{\theta}} V + N W_{\lambda, k}$ almost singular during the sub-iteration procedure. To avoid that, we shrink the small elements to zero directly.
Furthermore, we enforce the first and last elements in $\wh{\bm{\theta}}_k$ to zero at the beginning of the sub-iteration procedure to alleviate boundary effect for the estimation of eigenfunctions.

For the sake of clarity, we summarize the algorithm as follows:
\begin{itemize}[leftmargin = 35pt]
\item[Step 1:] Give the initial value of $\bm{\mu}, \bm{\xi}$ and $\bm{\Theta}$.
\item[Step 2:] Estimate $\bm{\mu}$ using (\ref{muest}), then $\wh{\mu}(t) = \textbf{B}(t)^\top \wh{\bm{\mu}}$.
\item[Step 3:] Start with $k = 1$,
\begin{itemize}
\item[(1)] For $i = 1, \ldots, n$, update $\wh{\xi}_{ik}$ using (\ref{xiest}).
\item[(2)] Repeat the computation in (\ref{thetaest}) until the convergence of $\wh{\bm{\theta}}_k$.
\item[(3)] Repeat Step 3(1)--(2) until convergence.
\item[(4)] If $k < p$, let $k = k + 1$, repeat Step 3(1)--(3).
\end{itemize}
\item[Step 4:] Let $\wh{\psi}_k(t) = \textbf{B}(t)^\top \wh{\bm{\theta}}_k, k = 1, \ldots, p$, then $\wh{\phi}_k(t) = \wh{\psi}_k(t)/\|\wh{\psi}_k(t)\|_2$, where $\|\wh{\psi}_k(t)\|_2 = \{ \int_{\mathcal{T}} \wh{\psi}_k^2(t) dt \}^{1/2}$. Rescale $\wh{\bm{\xi}}_k$ correspondingly.
\end{itemize}

Let $\bm{\mu}^{(0)}, \bm{\xi}^{(0)}$ and $\bm{\Theta}^{(0)}$ denote the initial values of $\bm{\mu}, \bm{\xi}$ and $\bm{\Theta}$ respectively. Generally, one can set the initial values in a random way. Alternatively, we set $\bm{\mu}^{(0)}$ and $\bm{\Theta}^{(0)}$ as the FPCA estimates for $\{q_{ij}; i = 1,\ldots, n, j = 1, \ldots, m_i\}$ using local linear smoother \citep{yao2005functional}, neglecting the fact that these observations are binary, and then generate $\bm{\xi}^{(0)}$ randomly using the estimated eigenvalues. Throughout this article, we implement the latter scheme in initialization. Furthermore, choice for the number of FPCs is a long-standing issue in FPCA. Some popular information criterion, such as Akaike information crierion (AIC) \citep{yao2005functional} and BIC, can be applied. Note that as the FPCs are estimated sequentially in our algorithm, the number of FPCs has little effect on the SLFPCA estimates.

\subsection{Selection of Tuning Parameters}\label{SecTuning}

We take into account the selection of three tuning parameters involved in (\ref{penloglik1}): the smoothing parameter $\kappa_{\bm{\mu}}$ of mean function, the smoothing parameter $\kappa_{\bm{\theta}}$ of eigenfunctions, and the parameter $\lambda$ that controls the sparseness of eigenfunctions.

First, $\kappa_{\bm{\mu}}$ is selected via generalized cross-validation (GCV) method. In specific, $\kappa_{\bm{\mu}}$ only makes sense in the estimation of $\bm{\mu}$ in (\ref{muest}), which can be regard as smoothing $\{\wt{z}_{ij}; i = 1,\ldots, n, j = 1, \ldots, m_i\}$ through the penalized sum of squared errors fitting criterion. The details about GCV for the smoothing problem are provided in \citet{ramsay2005functional}. Next, we consider $\kappa_{\bm{\theta}}$ and $\lambda$ jointly as these two tuning parameters cooperate with each other in Step 3 of the algorithm. We define the following BIC-type criterion for the selection,
\begin{align}\label{bic}
\mbox{BIC}(\kappa_{\bm{\theta}}, \lambda) = -2 \sum_{i = 1}^n \sum_{j = 1}^{m_i} \log \pi \{q_{ij} (\textbf{B}_{ij}^\top \wh{\bm{\mu}} + \textbf{B}_{ij}^\top \wh{\bm{\Theta}}^\top \wh{\bm{\xi}}_i) \} + \Big (\sum_{k = 1}^p df_k \Big ) \cdot \log N,
\end{align}
where $df_k$ stands for the degrees of freedom in estimating $\bm{\theta}_k$. For a given $k$, let $\mathcal{A}_k$ be a set indexing non-zero elements in $\wh{\bm{\theta}}_k$.
Then
\begin{align}
df_k = \mbox{tr} \Big [U_{\mathcal{A}_k} \big \{ U_{\mathcal{A}_k}^{\top} U_{\mathcal{A}_k} + N \kappa_{\bm{\theta}} V_{\mathcal{A}_k} \big \}^{-1} U_{\mathcal{A}_k}^{\top} \Big ]. \nonumber
\end{align}
In practice, we select $(\kappa_{\bm{\theta}}, \lambda)$ that minimizes (\ref{bic}) from a set of candidates.

\section{ Theoretical Results }\label{SecTheory}

In this section, we study the consistency and sparsistency of the proposed method. We first discuss properties of $\wh{\phi}_k (t)$. Let $\mbox{NULL}(f) = \{t \in \mathcal{T} : f(t) = 0\}$ and $\mbox{SUPP}(f) = \{t \in \mathcal{T} : f(t) \neq 0\}$. The assumptions needed are listed as follows:
\begin{assumption}\label{Ass_phi}
There exists some constant $c > 0$ such that $|\phi_k^{(p^{\prime})} (t_1) - \phi_k^{(p^{\prime})} (t_2)| \leq c |t_1 - t_2|^{\nu}, \nu \in [0, 1]$. Moreover, $3/2 < r \leq d$, where $r = p^{\prime} + \nu$ and $d$ is the degree of the B-spline basis.
\end{assumption}
\begin{assumption}\label{Ass_sparsepen}
The tuning parameter $\lambda$ varies with $N$, and we assume that $\sqrt{\int_{SUPP_k} p_{\lambda}^{\prime} (|\phi_k (t)|)^2 dt} = O(N^{-1/2} K^{-3/2})$ and $\sqrt{\int_{SUPP_k} p_{\lambda}^{\prime \prime} (|\phi_k (t)|)^2 dt} = o(K^{-3/2})$ as $\lambda$ goes to zero, where $\mbox{SUPP}_k = \mbox{SUPP}(\phi_k)$ and $K$ is the number of interior knots for the B-spline basis.
\end{assumption}
\begin{assumption}\label{Ass_tuning}
For the number of interior knots, we assume $K = o(N^{1/4})$ and $K/N^{\frac{1}{2(r + 1)}} \rightarrow \infty$. For smoothing parameters, we assume $\kappa_{\bm{\mu}} = o(N^{-1/2})$ and $\kappa_{\bm{\theta}} = o(N^{-1/2})$. For sparseness parameter, we assume $\lambda = o(1)$ and $\lambda N^{1/2} K^{-3/2} \rightarrow \infty$.
\end{assumption}

Assumption \ref{Ass_phi} requires the eigenfunctions to be sufficiently smooth and refers to (H.3) in \citet{Cardot2003spline} and (C2) in \citet{lin2017locally}.
Assumption \ref{Ass_sparsepen} can be regarded as a functional generalization of ($\mbox{B}^{\prime}$) and ($\mbox{C}^{\prime}$) in \citet{fan2004nonconcave} and is the same as (C3) in \citet{lin2017locally}. This assumption ensures that the influence of the sparseness penalty on the estimation can be dominated by that of the likelihood function.
Assumption \ref{Ass_tuning} specifies the choosing condition for tuning parameters, which can be a guideline in the parameter selection.

\begin{theorem}\label{theoryconsis}
Under Assumptions \ref{Ass_phi} - \ref{Ass_tuning}, for $k = 1, \ldots, p$,
\begin{align}
\sup_{t \in \mathcal{T}} |\wh{\phi}_k (t) - \phi_k(t)| = O_p(N^{-1/2} K), \nonumber
\end{align}
when FPC score $\bm{\xi}_0$ is given.
\end{theorem}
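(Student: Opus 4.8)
The plan is to exploit the conditioning on the true scores $\bm{\xi}_0$: once $\bm{\xi}$ is fixed, the estimation of each $\bm{\theta}_k$ (and of $\bm{\mu}$) reduces to a penalized spline problem in which the only randomness enters through the binary responses $y_{ij}$. Concretely, at a stationary point of the penalized likelihood (equivalently, at convergence of the MM/least-squares surrogate) $\wh{\bm{\theta}}_k$ satisfies the normal-equation form (\ref{thetaest}) with $\Omega_k := U^\top U + N\kappa_{\bm{\theta}}V + NW_{\lambda,k}$. I would first split the error using the best B-spline approximant of the eigenfunction. Let $\phi_k^{*}(t) = \mathbf{B}(t)^\top\bm{\theta}_k^{*}$ denote the spline closest to $\phi_k$; under Assumption \ref{Ass_phi} (smoothness of order $r$ with $r\le d$), de Boor-type approximation theory gives $\sup_{t}|\phi_k^{*}(t)-\phi_k(t)| = O(K^{-r})$. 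Writing $\sup_t|\wh{\phi}_k - \phi_k| \le \sup_t|\wh{\phi}_k - \phi_k^{*}| + \sup_t|\phi_k^{*}-\phi_k|$, it remains to control the estimation part $\sup_t|\mathbf{B}(t)^\top(\wh{\bm{\theta}}_k - \bm{\theta}_k^{*})|$ and then to verify that the approximation error is dominated.

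For the estimation part, the key device is the partition-of-unity and compact-support structure of the B-spline basis: since the $B_l$ are nonnegative and sum to one, $\sup_t|\mathbf{B}(t)^\top\bm{v}|\le\|\bm{v}\|_\infty\le\|\bm{v}\|_2$, so it suffices to bound $\|\wh{\bm{\theta}}_k-\bm{\theta}_k^{*}\|_2$. I would linearize the (nonquadratic) Bernoulli stationarity condition about $\bm{\theta}_k^{*}$ to obtain $\wh{\bm{\theta}}_k-\bm{\theta}_k^{*} = \Omega_k^{-1}\{\sum_{ij}\xi_{ik}\mathbf{B}_{ij}(y_{ij}-\pi(X_{ij}^{*})) - N\kappa_{\bm{\theta}}V\bm{\theta}_k^{*} - NW_{\lambda,k}\bm{\theta}_k^{*}\} + (\text{higher order})$. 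Two facts then give the rate. First, because $\tfrac1N U^\top U$ has eigenvalues of order $K^{-1}$ (the B-spline Gram matrix under design regularity) while the penalty matrices are nonnegative definite, $\|\Omega_k^{-1}\|_{\mathrm{op}} = O(K/N)$; the remainder of the linearization is controlled using $K=o(N^{1/4})$. Second, the summands $\xi_{ik}\mathbf{B}_{ij}(y_{ij}-\pi(X_{ij}^{*}))$ are (conditionally) mean-zero with bounded Bernoulli variance, and by compact support each $\|\mathbf{B}_{ij}\|^2 = O(1)$, so the score has squared norm $O_p(N)$, i.e.\ $\ell_2$-norm $O_p(N^{1/2})$. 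Multiplying, the noise contribution to $\|\wh{\bm{\theta}}_k-\bm{\theta}_k^{*}\|_2$ is $O_p(K/N)\cdot O_p(N^{1/2}) = O_p(N^{-1/2}K)$, which is exactly the claimed order.

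It then remains to show that the two penalty-induced biases are $o_p(N^{-1/2}K)$ and that the approximation error $O(K^{-r})$ is dominated. For the roughness term, working in function space and using that $\bm{\theta}_k^{*\top}V\bm{\theta}_k^{*}=O(1)$ (bounded roughness of the approximant), the smoothing bias is of order $\kappa_{\bm{\theta}}=o(N^{-1/2})$, hence negligible. For the sparseness term, Assumption \ref{Ass_sparsepen} is calibrated precisely so that $NW_{\lambda,k}\bm{\theta}_k^{*}$, whose magnitude is governed by $\sqrt{\int_{\mathrm{SUPP}_k}p_\lambda'(|\phi_k|)^2\,dt}=O(N^{-1/2}K^{-3/2})$, contributes $O(K/N)\cdot O(N^{1/2}K^{-1/2})=O(N^{-1/2}K^{-1/2})$ after premultiplication by $\Omega_k^{-1}$, again dominated; the dependence of the local-quadratic surrogate $W_{\lambda,k}$ on the current iterate $\phi_{k,0}$ is absorbed by replacing it with $\phi_k$ at the cost of lower-order terms. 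Finally, Assumption \ref{Ass_tuning}'s condition $K/N^{1/(2(r+1))}\to\infty$ is equivalent to $K^{-r}=O(N^{-1/2}K)$, so the approximation error is absorbed, and the normalization $\wh{\phi}_k=\wh{\psi}_k/\|\wh{\psi}_k\|_2$ alters the bound only by the factor $\|\wh{\psi}_k\|_2^{-1}\to1$. Collecting the pieces yields $\sup_t|\wh{\phi}_k-\phi_k|=O_p(N^{-1/2}K)$.

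The hardest part, I expect, is twofold and intertwined. The first difficulty is the uniform lower bound on the smallest eigenvalue of $\tfrac1N U^\top U$ at order $K^{-1}$ together with the validity of linearizing the nonquadratic Bernoulli score, which is where $K=o(N^{1/4})$ enters to keep the design from degenerating and the remainder negligible. The second, more delicate, difficulty is the nonconvex, data-dependent SCAD sparseness penalty: since $W_{\lambda,k}$ is built from the local quadratic approximation evaluated at the previous iterate, one must show that its bias does not leak into the support region beyond what Assumption \ref{Ass_sparsepen} permits, while also controlling cross-terms arising from the sequential estimation of the $\bm{\theta}_k$ and from plugging in $\wh{\bm{\mu}}$ in place of the truth.
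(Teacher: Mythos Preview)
Your argument is essentially correct, but the paper takes a different route. Rather than linearizing the estimating equation (\ref{thetaest}) and inverting $\Omega_k$, the paper uses the Fan--Li ``local minimizer in a shrinking ball'' device applied to the \emph{original} penalized objective $\wt Q$ (with the fSCAD penalty itself, not its local-quadratic surrogate). Writing $\bm{\Omega}_0=(\bm{\theta}_{01}^\top,\ldots,\bm{\theta}_{0p}^\top)^\top$ and $\alpha_N=N^{-1/2}K$, it shows that
\[
P\Bigl\{\inf_{\|\bm u\|_2=C}\wt Q(\bm{\Omega}_0+\alpha_N\bm u,\bm{\xi}_0)>\wt Q(\bm{\Omega}_0,\bm{\xi}_0)\Bigr\}\ge 1-\epsilon
\]
by Taylor expanding the log-likelihood (so that the positive quadratic form $\tfrac{\alpha_N^2}{2}\bm u^\top I(\bm{\Omega}_0,\bm{\xi}_0)\bm u$ dominates), bounding the roughness-penalty increment by $o(N^{-1/2}K\alpha_N)$, and bounding the fSCAD increment on $\mbox{SUPP}(\wt\phi_k)$ via Assumption~\ref{Ass_sparsepen} together with the gradient/Hessian bounds of \citet{lin2017locally}. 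From $\|\wh{\bm\theta}_k-\bm\theta_{0k}\|_2=O_p(\alpha_N)$ the sup-norm conclusion follows exactly as you sketch: the partition-of-unity bound $\sup_t|\mathbf B(t)^\top\bm v|\le\|\bm v\|_2$, de Boor approximation $O(K^{-r})$, and $K^{-r}=O(N^{-1/2}K)$ from Assumption~\ref{Ass_tuning}.

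The practical difference is exactly the point you flag as delicate. By comparing values of $\wt Q$ on a sphere, the paper never needs to identify a fixed point of the MM iteration with a stationary point of the true objective, so the data-dependent $W_{\lambda,k}$ and its degeneracy at zero coefficients simply do not appear in the argument. Your normal-equation approach, by contrast, must justify that identification and control the iterate-dependent surrogate; you handle this by replacing $\phi_{k,0}$ with $\phi_k$ plus lower-order terms, which is workable but more intricate. What your route buys is an explicit error decomposition into noise, smoothing bias and sparseness bias, making the role of each tuning assumption transparent; the paper's $\Delta_1,\Delta_2,\Delta_3$ bounds achieve the same conclusion with the contributions left implicit.
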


\begin{theorem}\label{theorySpar}
Under Assumptions \ref{Ass_phi} - \ref{Ass_tuning}, $\mbox{NULL}(\wh{\phi}_k) \rightarrow \mbox{NULL}(\phi_k)$ and $\mbox{SUPP}(\wh{\phi}_k) \rightarrow \mbox{SUPP}(\phi_k)$ in probability, as $N \rightarrow \infty$, when FPC score $\bm{\xi}_0$ is given.
\end{theorem}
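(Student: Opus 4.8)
The plan is to prove the two set-convergence statements by establishing a pair of inclusions through two distinct mechanisms: $\mbox{SUPP}(\phi_k) \subseteq \mbox{SUPP}(\wh{\phi}_k)$ will follow almost directly from the uniform consistency of Theorem \ref{theoryconsis}, whereas $\mbox{NULL}(\phi_k) \subseteq \mbox{NULL}(\wh{\phi}_k)$ will require exploiting the sparseness penalty together with the compact-support structure of the B-spline basis. Once both inclusions are in hand, I would combine them by complementation: since $\mathcal{T}$ is partitioned into $\mbox{SUPP}(\phi_k)$ and $\mbox{NULL}(\phi_k)$, the second inclusion also yields $\mbox{SUPP}(\wh{\phi}_k) \subseteq \mbox{SUPP}(\phi_k)$, and together with the first gives $\mbox{SUPP}(\wh{\phi}_k) = \mbox{SUPP}(\phi_k)$ and $\mbox{NULL}(\wh{\phi}_k) = \mbox{NULL}(\phi_k)$ with probability tending to one, which is the asserted convergence in probability.

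For the support direction, I would first note that Assumption \ref{Ass_tuning} gives $K = o(N^{1/4})$, so the rate in Theorem \ref{theoryconsis} satisfies $N^{-1/2} K = o(N^{-1/4}) \to 0$, whence $\sup_{t} |\wh{\phi}_k(t) - \phi_k(t)| = o_p(1)$. At any interior point $t_0 \in \mbox{SUPP}(\phi_k)$ we have $|\phi_k(t_0)| > 0$, and by the smoothness in Assumption \ref{Ass_phi} there is a neighbourhood of $t_0$ on which $|\phi_k|$ is bounded below by a positive constant; the uniform bound then forces $\wh{\phi}_k$ to be nonzero throughout that neighbourhood with probability tending to one, giving $\mbox{SUPP}(\phi_k) \subseteq \mbox{SUPP}(\wh{\phi}_k)$ asymptotically.

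For the null direction, I would pass to the coefficient representation $\phi_k(t) = \textbf{B}(t)^\top \bm{\theta}_k$ and use the compact-support property: a function vanishing on a subinterval contained in $\mbox{NULL}(\phi_k)$ corresponds to $d+1$ consecutive zero coefficients of $\bm{\theta}_k$. Because the scores $\bm{\xi}_0$ are taken as given, the estimation of $\bm{\theta}_k$ reduces to the penalized least-squares problem underlying (\ref{thetaest}), so I can write its optimality (KKT) conditions directly. On a subinterval where $\phi_k \equiv 0$ the local norm $\sqrt{\frac{K+1}{T}\int_{\tau_{m-1}}^{\tau_m} \phi_k^2(t)\,dt}$ is near zero, where the SCAD derivative behaves like $p_{\lambda}'(\cdot) = \lambda$; hence the penalty exerts a subgradient ``force'' of order $\lambda$ pushing the corresponding coefficients to zero. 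I would then bound the gradient of the loss term restricted to the null region, controlling it through the sup-norm rate of Theorem \ref{theoryconsis} and the penalty-order conditions of Assumption \ref{Ass_sparsepen}, and show it is of smaller order than the penalty force. Concretely, Assumption \ref{Ass_tuning} supplies $\lambda N^{1/2} K^{-3/2} \to \infty$, i.e.\ $\lambda$ dominates $N^{-1/2} K^{3/2}$, which is exactly the separation needed so that the penalty overwhelms the loss gradient; the optimality conditions then cannot be satisfied at a nonzero value, forcing the null-region coefficients, and hence $\wh{\phi}_k$ on those subintervals, to be exactly zero with probability tending to one.

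The main obstacle I anticipate is the null direction: carefully establishing that the loss gradient on the null region is of the claimed order requires combining the sup-norm consistency with the local-quadratic-approximation error of the SCAD penalty and propagating both through the non-smooth, non-convex optimality conditions. An additional delicate point is the treatment of the transition subintervals that straddle $\mbox{NULL}(\phi_k)$ and $\mbox{SUPP}(\phi_k)$, where the local norm is neither bounded away from zero nor exactly zero; I would argue that the total length of such subintervals is $O(1/K) \to 0$, so their contribution is asymptotically negligible and the two set-convergence statements hold in probability.
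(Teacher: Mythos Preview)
Your proposal captures the same two-pronged strategy as the paper: uniform consistency from Theorem~\ref{theoryconsis} for the support direction, and domination of the loss gradient by the SCAD penalty (via $\lambda N^{1/2}K^{-3/2}\to\infty$) in the first-order conditions for the null direction. The key rate comparison you identify is exactly the one the paper uses.

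However, your logical framing has a gap. You propose to prove the two \emph{exact} inclusions $\mbox{SUPP}(\phi_k)\subseteq\mbox{SUPP}(\wh{\phi}_k)$ and $\mbox{NULL}(\phi_k)\subseteq\mbox{NULL}(\wh{\phi}_k)$ with probability tending to one, and then conclude exact set equality. Neither inclusion holds exactly, because of the boundary effect you yourself flag at the end. For the null direction: if $t\in\mbox{NULL}(\phi_k)$ lies within distance $O(K^{-1})$ of $\partial\,\mbox{NULL}(\phi_k)$, some B-spline basis functions supported at $t$ straddle into $\mbox{SUPP}(\phi_k)$ and carry nonzero coefficients, so $\wh{\phi}_k(t)$ need not vanish there. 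For the support direction: at points where $|\phi_k(t)|$ is small but nonzero (again near the boundary), uniform consistency of order $O_p(N^{-1/2}K)$ cannot prevent $\wh{\phi}_k(t)=0$. Thus the complementation argument you outline in the first paragraph does not close.

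The paper handles this by an explicit three-region decomposition $\mathcal{T}=\mathcal{T}_k^{(1)}\cup\mathcal{T}_k^{(2)}\cup\mathcal{T}_k^{(3)}$, where $\mathcal{T}_k^{(1)}=\{|\phi_k|>aC(\lambda+K^{-r})\}$, $\mathcal{T}_k^{(2)}=\mbox{NULL}(\phi_k)$, and $\mathcal{T}_k^{(3)}$ is the transition sliver. It proves only the weaker statements $\mathcal{T}_k^{(1)}\subset\mbox{SUPP}(\wh{\phi}_k)$ and $\bigcup_{l\in\mathcal{A}_k^{(2)}}\mathcal{S}_l\subset\mbox{NULL}(\wh{\phi}_k)$ (where $\mathcal{A}_k^{(2)}$ indexes basis functions whose support lies wholly in $\mathcal{T}_k^{(2)}$), then sandwiches $\mbox{NULL}(\wh{\phi}_k)$ between sets whose symmetric difference with $\mbox{NULL}(\phi_k)$ is contained in $\mathcal{T}_k^{(3)}$, and finally lets $\mathcal{T}_k^{(3)}\to\emptyset$. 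So the conclusion is genuinely set \emph{convergence}, not eventual equality; you should restructure your argument as a sandwich rather than a pair of exact inclusions. A smaller point: the first-order conditions should be taken for the penalized Bernoulli likelihood $\wt{Q}$ itself, not for the MM surrogate least-squares problem in (\ref{thetaest}), since the theorem concerns the minimizer of the former.
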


Consistency and sparsistency of the estimated eigenfunctions are stated in the above two theorems. We then explore the asymptotic property of the estimated FPC scores. The following assumption is required:

\begin{assumption}\label{Ass_M}
\normalfont The observation sizes $m_i$'s are independent realizations of the random variable $m$, and are independent of $\big \{(t_{ij}, Y_{ij}): j = 1, \ldots, m_i \big \}$.
Assume that $m_i = O_p(M), i = 1, \ldots, N$ and $M \rightarrow \infty$.
\end{assumption}

\begin{theorem}\label{theoryXi}
Under Assumption \ref{Ass_M}, we have
\begin{align}
|\wh{\xi}_{ik} - \xi_{ik}| = O_p(M^{-1/2}), \nonumber
\end{align}
for $i = 1, \ldots, n, k = 1, \ldots, p$, when coefficient matrix $\bm{\Theta}_0$ of eigenfunctions is given.
\end{theorem}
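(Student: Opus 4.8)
The plan is to exploit the fact that, once the eigenfunction coefficient matrix $\bm{\Theta}_0$ is held fixed at the truth, the score estimation decouples across subjects: $\wh{\bm{\xi}}_i$ depends only on the $m_i$ observations of subject $i$, which is exactly why the attainable rate is $M^{-1/2}$ rather than $N^{-1/2}$. At a fixed point of the coordinate updates (\ref{xiest}), $\wh{\bm{\xi}}_i$ is a stationary point of subject $i$'s Bernoulli log-likelihood, and since $\frac{d}{dv}\log\pi(v) = 1 - \pi(v)$ together with the identity $q_{ij}\{1 - \pi(q_{ij} v)\} = y_{ij} - \pi(v)$, it solves the logistic score equation
\begin{align}
\sum_{j = 1}^{m_i} \bm{b}_{ij} \{ y_{ij} - \pi(\wh{X}_{ij}) \} = \bm{0}, \qquad \bm{b}_{ij} = \big(\phi_1(t_{ij}), \ldots, \phi_p(t_{ij})\big)^\top = \big(\textbf{B}_{ij}^\top \bm{\theta}_1, \ldots, \textbf{B}_{ij}^\top \bm{\theta}_p\big)^\top, \nonumber
\end{align}
where $\wh{X}_{ij} = \mu(t_{ij}) + \bm{b}_{ij}^\top \wh{\bm{\xi}}_i$ and the mean is held at its true value. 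This recasts the task as a standard logistic-regression maximum-likelihood problem with covariate vector $\bm{b}_{ij}$ and offset $\mu(t_{ij})$, solved on the $m_i$ records of subject $i$.

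Next I would linearize the score equation around the true score vector $\bm{\xi}_i$. Using $\pi^{\prime}(v) = \pi(v)\{1 - \pi(v)\}$, a one-term Taylor expansion yields
\begin{align}
\wh{\bm{\xi}}_i - \bm{\xi}_i = \hha_i^{-1} \bm{g}_i, \quad \bm{g}_i = \sum_{j = 1}^{m_i} \bm{b}_{ij} \{ y_{ij} - \pi(X_{ij}) \}, \quad \hha_i = \sum_{j = 1}^{m_i} \bm{b}_{ij} \bm{b}_{ij}^\top\, \pi(\wt{X}_{ij}) \{ 1 - \pi(\wt{X}_{ij}) \}, \nonumber
\end{align}
with $\wt{X}_{ij}$ between $\wh{X}_{ij}$ and $X_{ij}$. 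By Assumption \ref{Ass_M} the $m_i$ are independent of $\{(t_{ij}, Y_{ij})\}$, so I may condition on $m_i$; then the summands of $\bm{g}_i$ are conditionally independent, mean zero given $X_i$, and bounded, giving $\mbox{var}(\bm{g}_i) = O(m_i)$ and hence $\|\bm{g}_i\| = O_p(m_i^{1/2})$. For the denominator, since $\mu$ and the $\phi_k$ are continuous on the compact $\mathcal{T}$ and the scores are fixed, $X_{ij}$ is uniformly bounded, so $\pi(1 - \pi)$ is bounded below by a positive constant; combined with the orthonormality (hence asymptotic nondegeneracy) of $\{\phi_k\}$, a law of large numbers gives $m_i^{-1} \hha_i \rightarrow \hha_\infty$ for a positive-definite $\hha_\infty$, so that $\|\hha_i^{-1}\| = O_p(m_i^{-1})$. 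Combining the two bounds yields $\|\wh{\bm{\xi}}_i - \bm{\xi}_i\| = O_p(m_i^{-1/2})$, and since $m_i = O_p(M)$, the componentwise claim $|\wh{\xi}_{ik} - \xi_{ik}| = O_p(M^{-1/2})$ follows.

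The main obstacle is two-fold. First, one must rigorously justify the passage from the single coordinate update (\ref{xiest}) to the characterization of $\wh{\bm{\xi}}_i$ as a solution of the joint logistic score equation, i.e. that the inner iterations in Step 3 converge to a subject-level stationary point; here one must also account for the fact that $\wh{\bm{\mu}}$ and the within-subject cross-component updates are estimated rather than exactly equal to their targets, and argue that their errors enter only at a lower order than $M^{-1/2}$. Second, the lower bound on the observed information $\hha_i$ must hold with enough uniformity that the per-subject rate $m_i^{-1/2}$ transfers to $M^{-1/2}$ through $m_i = O_p(M)$; this needs a mild design condition ensuring the observation times $\{t_{ij}\}_j$ are sufficiently spread over $\mathcal{T}$ so that $m_i^{-1} \sum_j \bm{b}_{ij} \bm{b}_{ij}^\top$ stays bounded below, together with the nondegeneracy supplied by the orthonormality of the eigenfunctions.
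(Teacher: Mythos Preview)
Your argument is essentially correct and rests on the same two probabilistic estimates the paper uses: the subject-level score at the truth is $O_p(m_i^{1/2})$, and the observed information is of exact order $m_i$ with a positive-definite limit. The difference is only in packaging. The paper employs the Fan--Li localization device: with $\beta_M=M^{-1/2}$ it shows that for large enough $D$,
\[
P\Big\{\inf_{\|\bm{u}\|_2=D} Q^\ast(\bm{\Theta}_0,\bm{\xi}_{0i}+\beta_M\bm{u})>Q^\ast(\bm{\Theta}_0,\bm{\xi}_{0i})\Big\}\geq 1-\epsilon,
\]
via a second-order Taylor expansion of $Q^\ast$ in which the quadratic term $\tfrac{M\beta_M^2}{2}\bm{u}^\top I(\bm{\Theta}_0,\bm{\xi}_{0i})\bm{u}$ dominates the linear one. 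You instead Taylor-expand the score equation directly and invert the observed information. The localization route has the minor advantage of giving existence of a consistent local minimizer for free, whereas your formulation presupposes a root and then bounds it; both are standard and lead to the same rate.

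One correction: your first ``obstacle'' is not actually present. The theorem is stated with $\bm{\Theta}_0$ held at the truth, and the paper's proof works directly with the subject-level objective $Q^\ast(\bm{\Theta}_0,\bm{\xi}_i)=-\sum_{j=1}^{m_i}\log\pi\{q_{ij}\textbf{B}_{ij}^\top\bm{\Theta}_0^\top\bm{\xi}_i\}$ (with $\bm{\mu}$ set to zero for notational simplicity). There is no need to argue that the MM or coordinate updates in (\ref{xiest}) converge to a joint stationary point, nor to track estimation error in $\wh{\bm{\mu}}$; the estimator in question is simply a local minimizer of $Q^\ast$. Your second obstacle, the lower bound on the information, is genuine but is treated at the same heuristic level in the paper.
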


\begin{remark}
\normalfont In fact, the simultaneous derivation of asymptotic properties for the estimated eigenfunctions $\wh{\phi}_k(t)$ and FPC scores $\wh{\xi}_{ik}$ is intractable, because of the large number of parameters. Hence, we discuss their properties separately. However, the above theoretical results can still bring some insights for the estimates. It shows that the imposed penalties would not lead to invalid results, and the sparseness penalty is effective in identifying non-zero subinterval for our problem. Moreover, the simulation studies in Section \ref{SecSim} further demonstrate the good performance of the SLFPCA method in practice.
\end{remark}

\begin{remark}
\normalfont Theorem \ref{theoryXi} implies that the convergence rate of the estimated FPC scores depends on the observation size $M$. In specific, a divergent observation size is needed for the consistency of FPC scores, while the requirement is dispensable for asymptotic properties of the estimated eigenfunctions, as shown in Theorem \ref{theoryconsis} and Theorem \ref{theorySpar}. It is quite natural as FPC scores are varied from individual to individual, while eigenfunctions are specific for all subjects.
\end{remark}

\section{ Simulation }\label{SecSim}

In this section, we conduct a comprehensive numerical study to evaluate the performance of our SLFPCA method. We consider two scenarios in our simulation. First, we set functions that being strictly zero-valued in some subintervals as the true eigenfunctions. Second, the true eigenfunctions are set to be non-zero almost in the whole interval. We compare our SLFPCA method with the LGP method in \citet{hall2008modelling}. The criteria for assessment are as follows:
\begin{align}
\mbox{ISE}_{\mu} &= \| \mu - \wh{\mu} \|^2 = \int_{\mathcal{T}} \{\mu(t) - \wh{\mu}(t)\}^2 dt, \nonumber \\
\mbox{ISE}_k &= \| \phi_k - \wh{\phi}_k \|^2 = \int_{\mathcal{T}} \{\phi_k(t) - \wh{\phi}_k(t)\}^2 dt, \nonumber
\end{align}
where $\mbox{ISE}_{\mu}$ and $\mbox{ISE}_k$ measure the error of mean function estimates and eigenfunction estimates respectively. Obviously, a lower $\mbox{ISE}_{\mu}$ or $\mbox{ISE}_k$ indicates a more precise estimate.

\subsection{Sparse FPCs}\label{SubSecSparFPC}

We first discuss the behaviours of SLFPCA and LGP methods when the true FPCs possess local sparse features. To generate binary data $\{y_{ij}\mbox{:} \ i = 1,\ldots, n, j = 1, \ldots, m_i\}$, we begin with constructing $n = 200$ independent latent processes $X_i(t)$ through (\ref{KL}). The latent processes have mean function $\mu(t) = 2 \cdot \mbox{sin}(\pi t/5)/\sqrt{5}, t \in [0, 10]$. For the eigenfunctions $\phi_k(t)$, let $B_l(t)$ denote the $l$-th B-spline basis on [0, 10], with degree three and nine equally spaced interior knots. We explore the following two cases:
\begin{itemize}
\item Case 1: Define $\psi_1(t) = B_4(t), \psi_2(t) = B_{10}(t)$, then $\phi_k(t) = \psi_k(t)/\|\psi_k\|_2$, $k = 1, 2$.
\item Case 2: Define $\psi_1(t) = B_7(t), \psi_2(t) = B_4(t) - B_{10}(t)$, then $\phi_k(t) = \psi_k(t)/\|\psi_k\|_2$, $k = 1, 2$.
\end{itemize}
Moreover, the eigenvalues are set as $\lambda_1 = 3^2$, $\lambda_2 = 2^2$ and $\lambda_k = 0, k \geq 3$. The FPC scores $\xi_{ik}$ are simulated from $\mathcal{N}(0, \lambda_{k})$. Finally, we yield $y_{ij}$ using the probability obtained from $X_i(t)$. With regard to the observation grids, as our method can be executed to both dense and sparse (or longitudinal) designs, we take into account these two various designs in our simulation. Specifically, for the dense design, we consider the regular case and the observation sizes for all subjects are set as $m_1 = \cdots = m_n = 51$. On the other hand, for the sparse design, $m_i$ is uniformly selected from $\{8, \ldots, 12\}$ and observation grids $t_{ij}$ are uniformly sampled from $[0, 10]$ corresponding to $m_i$. We report the results for the dense design here and relegate the analysis for sparse design in the Supplementary Material. In addition, we also consider the settings in which only the first eigenvalue is non-zero as \citet{hall2008modelling} and present the results in the Supplementary Material.

Table \ref{SimTabSparse} lists the simulation results of SLFPCA and LGP over 100 Monte Carlo runs for the dense design when the true FPCs with local sparse features are provided. For the two considered cases that accept various FPC settings, it is evident that SLFPCA achieves much smaller $\mbox{ISE}_1$ and $\mbox{ISE}_2$, which implies SLFPCA outperforms LGP on the estimation of eigenfunctions. The attractive performance of SLFPCA compared with LGP is in accordance with our expectation, as the sparse FPCs scenario here is in favor of our method. Moreover, these two methods are at a similar level in estimating the mean functions, for they get nearly the same $\mbox{ISE}_{\mu}$. Schematically, Figures \ref{SimFigSparse1} and \ref{SimFigSparse2} exhibit the estimated FPCs in one randomly chosen run for both Case 1 and Case 2 respectively. It is shown that the estimated eigenfunctions obtained from LGP are non-zero over almost the whole interval, while SLFPCA can correctly identify the subinterval on which the true FPCs are non-zero valued.
As SLFPCA owns a great capacity in capturing the local sparse features, it is natural that SLFPCA gains more promising $\mbox{ISE}_1$ and $\mbox{ISE}_2$.

\begin{table}[H]
\caption{Average $\mbox{ISE}_{\mu}$ and $\mbox{ISE}_k$ with standard deviation in parentheses for 100 Monte Carlo runs, when the true FPCs are sparse.}
\label{SimTabSparse}
\begin{center}
\begin{tabular}{ccccc}
\hline
 &Method&$\mbox{ISE}_{\mu}$&$\mbox{ISE}_1$&$\mbox{ISE}_2$\\
\hline
\multirow{2}{*}{Case 1}&SLFPCA&0.3632(0.1472)&0.0182(0.0143)&0.0172(0.0131)\\
 &LGP&0.3653(0.1230)&0.1142(0.0624)&0.1064(0.0617)\\
\hline
\multirow{2}{*}{Case 2}&SLFPCA&0.1541(0.0805)&0.0455(0.1308)&0.0475(0.1211)\\
 &LGP&0.1627(0.0834)&0.3551(0.4712)&0.3319(0.4804)\\
\hline
\end{tabular}
\end{center}
\end{table}

\begin{figure}[H]
  \centering
  \includegraphics[width=\textwidth]{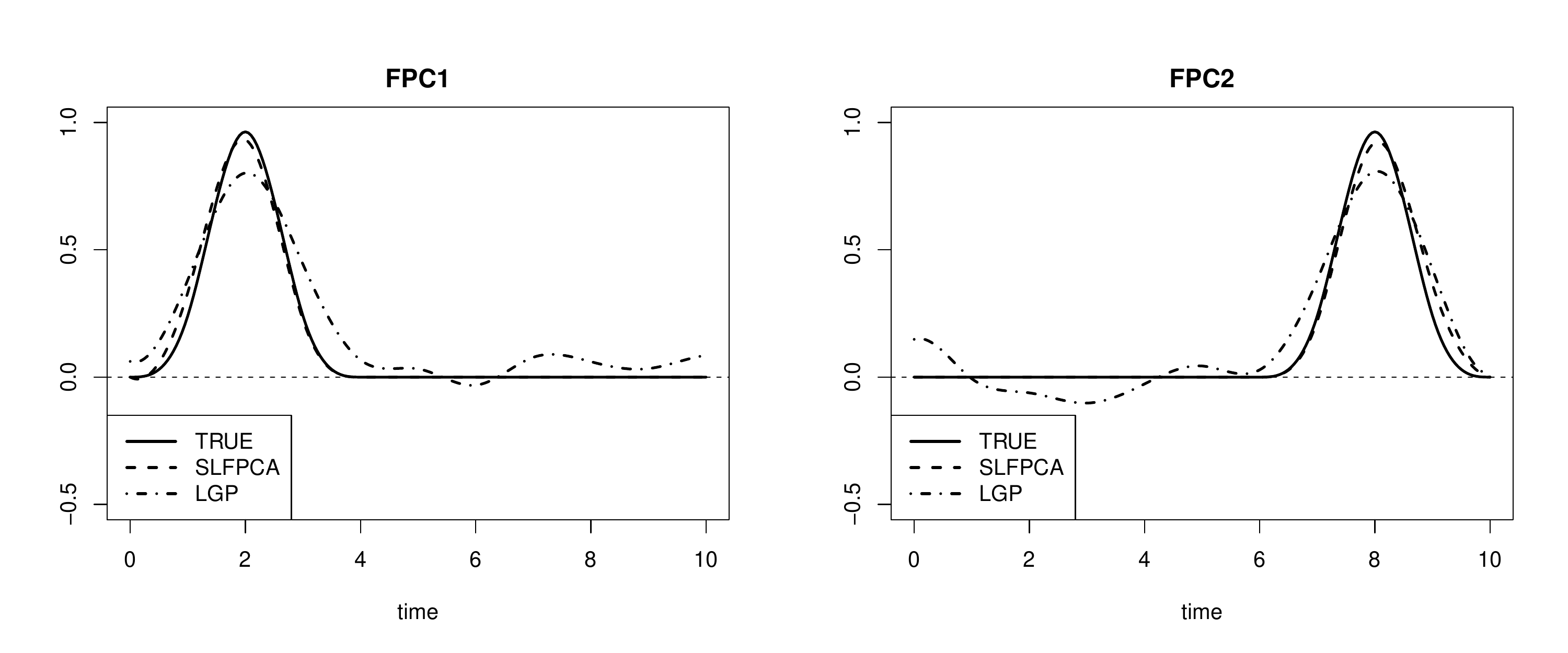}\\
  \caption{Eigenfunction estimates in one randomly chosen run for Case 1 illustrated in Section \ref{SubSecSparFPC} when the true FPCs are sparse. Thick solid lines are for true eigenfunctions, while lines for the estimated eigenfunctions obtained from SLFPCA and LGP are in dashed and dotted-dashed types respectively.}
  \label{SimFigSparse1}
\end{figure}

\begin{figure}[H]
  \centering
  \includegraphics[width=\textwidth]{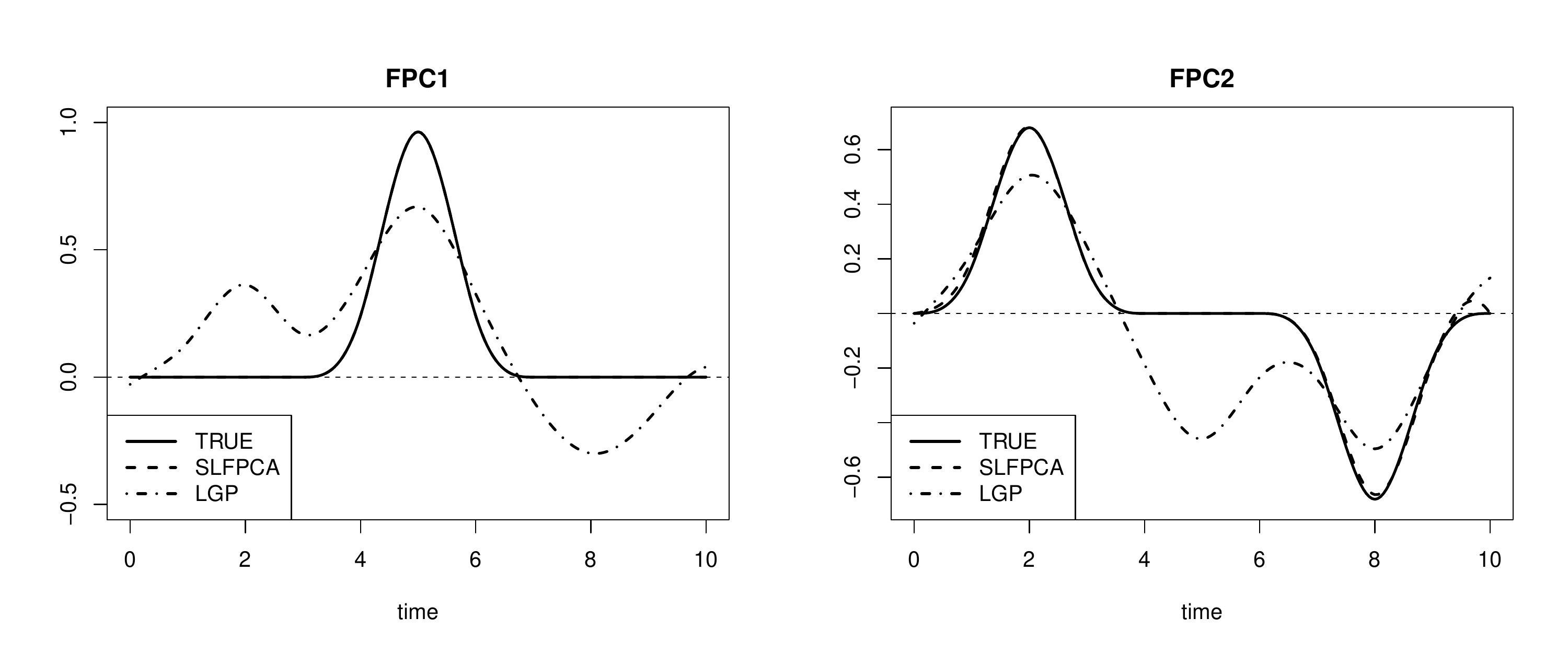}\\
  \caption{Eigenfunction estimates in one randomly chosen run for Case 2 illustrated in Section \ref{SubSecSparFPC} when the true FPCs are sparse. Thick solid lines are for true eigenfunctions, while lines for the estimated eigenfunctions obtained from SLFPCA and LGP are in dashed and dotted-dashed types respectively.}
  \label{SimFigSparse2}
\end{figure}

\subsection{Non-sparse FPCs}\label{SubSecNonSparFPC}

We then explore the simulation results when the true FPCs are non-sparse. This scenario is not inclined to SLFPCA any more, whereas we shall show our method still yields nice estimating results compared with LGP. The setups are the same as that in Section \ref{SubSecSparFPC}, except for eigenfunctions. We also consider two cases:
\begin{itemize}
\item Case 3: $\phi_1(t) = \mbox{cos}(\pi t/5)/\sqrt{5}, \phi_2(t) = \mbox{sin}(\pi t/5)/\sqrt{5}, t \in [0,10]$.
\item Case 4: $\phi_1(t) = \mbox{cos}(\pi t/5)/\sqrt{5}, \phi_2(t) = \mbox{cos}(2 \pi t/5)/\sqrt{5}, t \in [0,10]$.
\end{itemize}
These two cases have the same first eigenfunction, while the second eigenfunction in Case 4 undertakes more variability. We present the estimating results for dense design here. The remaining results, for sparse design and for the case where only the first eigenvalue is non-zero, are provided in the Supplementary Material.

The simulation results over 100 Monte Carlo runs when the true FPCs are non-sparse are displayed in Table \ref{SimTabNonSparse}. It is observed that SLFPCA still reaches lower $\mbox{ISE}_1$ and $\mbox{ISE}_2$ than LGP, though the difference between these two methods is much smaller than that in Section \ref{SubSecSparFPC}. SLFPCA also offers a more accurate estimate for the mean function according to $\mbox{ISE}_{\mu}$. Therefore, SLFPCA is a competitive approach even when the true FPCs show no local sparse feature. Further, the estimated eigenfunctions are visualized in Figures \ref{SimFigNonSparse1} and \ref{SimFigNonSparse2} for one randomly chosen run. Both figures clarify that SLFPCA and LGP perform similarly when true FPCs are non-sparse and yield estimates close to the true eigenfunctions. Note that SLFPCA does not produce sparse eigenfunction estimates in Figures \ref{SimFigNonSparse1} and \ref{SimFigNonSparse2}. The reason is that the tuning parameter $\lambda$ is selected to be zero via BIC, and SLFPCA is equivalent to general FPCA when $\lambda = 0$. Hence, these two cases further demonstrate the ability of SLFPCA in identifying the non-zero subintervals.

\begin{table}[H]
\caption{Average $\mbox{ISE}_{\mu}$ and $\mbox{ISE}_k$ with standard deviation in parentheses for 100 Monte Carlo runs, when the true FPCs are non-sparse.}
\label{SimTabNonSparse}
\begin{center}
\begin{tabular}{ccccc}
\hline
 &Method&$\mbox{ISE}_{\mu}$&$\mbox{ISE}_1$&$\mbox{ISE}_2$\\
\hline
\multirow{2}{*}{Case 3}&SLFPCA&0.2441(0.1292)&0.0151(0.0171)&0.0175(0.0172)\\
 &LGP&0.2678(0.1302)&0.0178(0.0181)&0.0218(0.0180)\\
\hline
\multirow{2}{*}{Case 4}&SLFPCA&0.1955(0.0715)&0.0113(0.0125)&0.0270(0.0170)\\
 &LGP&0.2075(0.0698)&0.0168(0.0145)&0.0298(0.0178)\\
\hline
\end{tabular}
\end{center}
\end{table}

\begin{figure}[H]
  \centering
  \includegraphics[width=\textwidth]{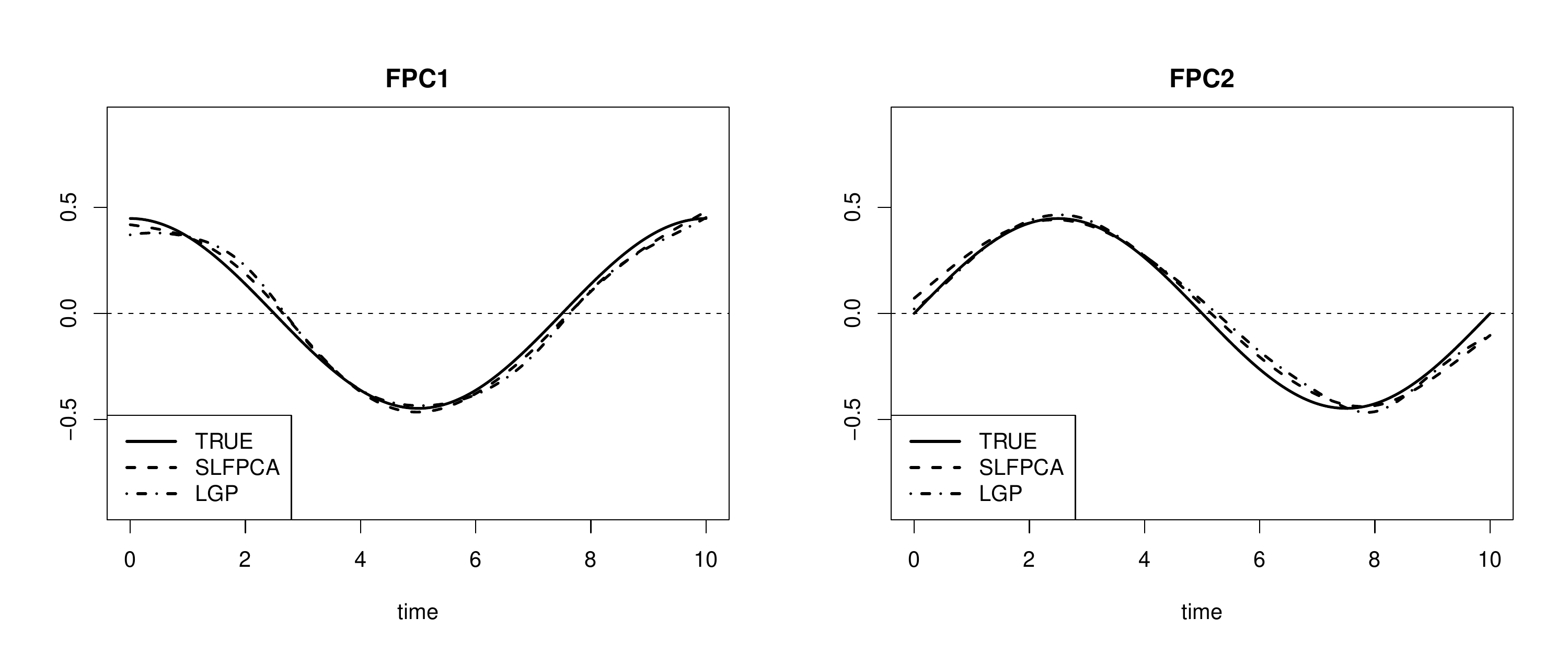}\\
  \caption{Eigenfunction estimates in one randomly chosen run for Case 3 illustrated in Section \ref{SubSecNonSparFPC} when the true FPCs are non-sparse. Thick solid lines are for true eigenfunctions, while lines for the estimated eigenfunctions obtained from SLFPCA and LGP are in dashed and dotted-dashed types respectively.}
  \label{SimFigNonSparse1}
\end{figure}

\begin{figure}[H]
  \centering
  \includegraphics[width=\textwidth]{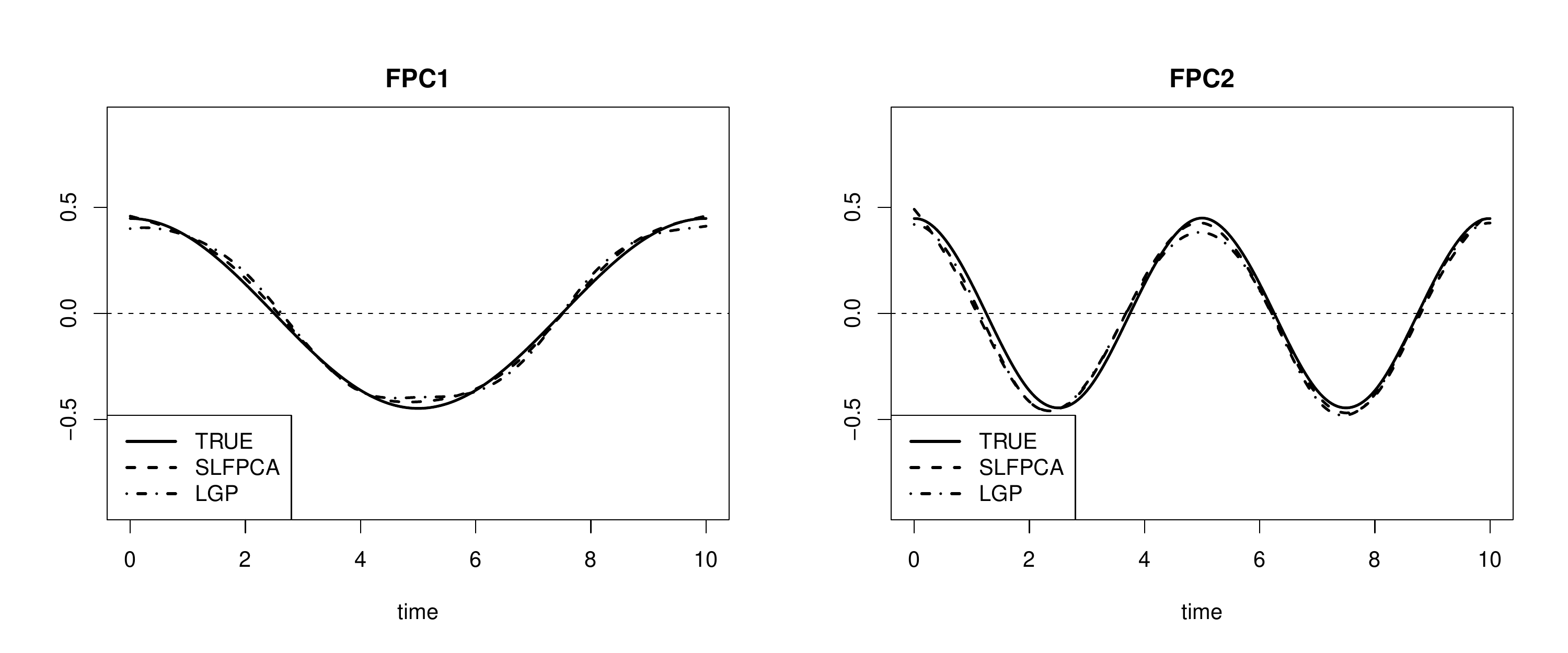}\\
  \caption{Eigenfunction estimates in one randomly chosen run for Case 4 illustrated in Section \ref{SubSecNonSparFPC} when the true FPCs are non-sparse. Thick solid lines are for true eigenfunctions, while lines for the estimated eigenfunctions obtained from SLFPCA and LGP are in dashed and dotted-dashed types respectively.}
  \label{SimFigNonSparse2}
\end{figure}

\section{ Real Data Analysis }\label{SecReal}

In this section, we apply our proposed SLFPCA method to the physical activity data collected from \citet{kozey-keadle2014changes}. The data are generated from a health monitoring project which measured the metabolic effects of several interventions to increase physical activity and reduce sedentary behaviors (e.g. sitting or lying down) in office workers. A wearable monitor, ActivPAL$^{\rm TM}$ (www.paltech.plus.com), was used to track the wearer's leg movement over time. In particular, the device detected leg angle change when the wearer stands up, which showed an interruption of sedentary behavior (0, no; 1, yes). For each participant, the observations obtained from the monitor are summarized into consecutive five-minute intervals. There are $n = 60$ individuals involved in this project and each individual was tracked for $36$ five-minute records.

Figures \ref{realmean} and \ref{realeig} show the estimated mean function $\wh{\mu}(t)$ and eigenfunctions $\wh{\phi}_k(t)$'s by SLFPCA. The tuning parameters are selected as presented in Section \ref{SecTuning}. Moreover, we choose the number of FPCs as $p = 2$ by BIC. The mean function for the latent process indicates individuals were likely to interrupt their sedentary behaviors to take intense exercises at about $t = 12$. After about 30 minutes' active physical exercises, more sedentary behaviors were observed and then the interruptions of sedentary behaviors increased back to the starting level.

\begin{figure}[H]
  \centering
  \includegraphics[width=0.7 \textwidth]{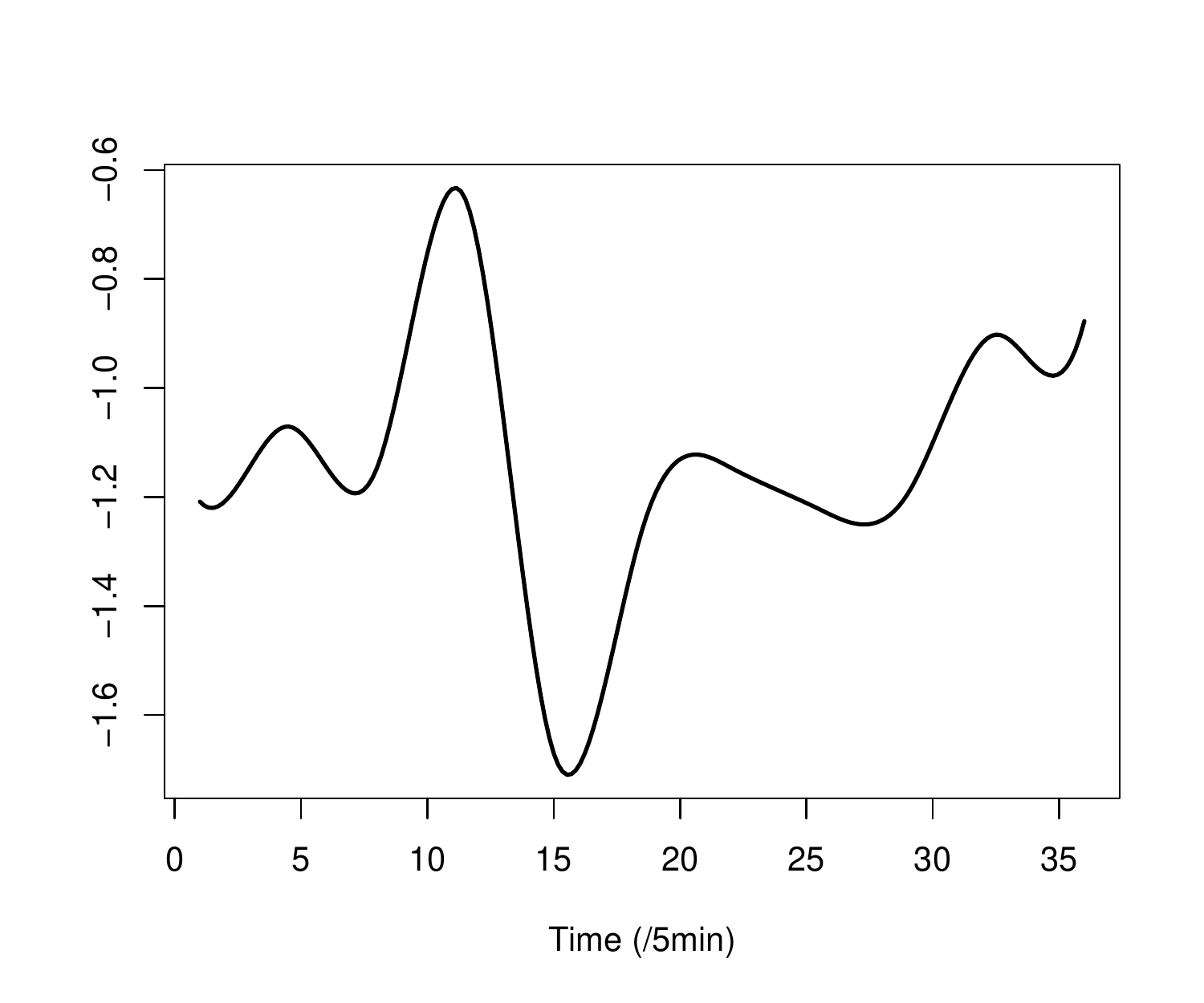}\\
  \caption{ Estimated mean function of the latent process $X(t)$ for the physical activity data.}
  \label{realmean}
\end{figure}

\begin{figure}[htbp]
  \centering
  \includegraphics[width=\textwidth]{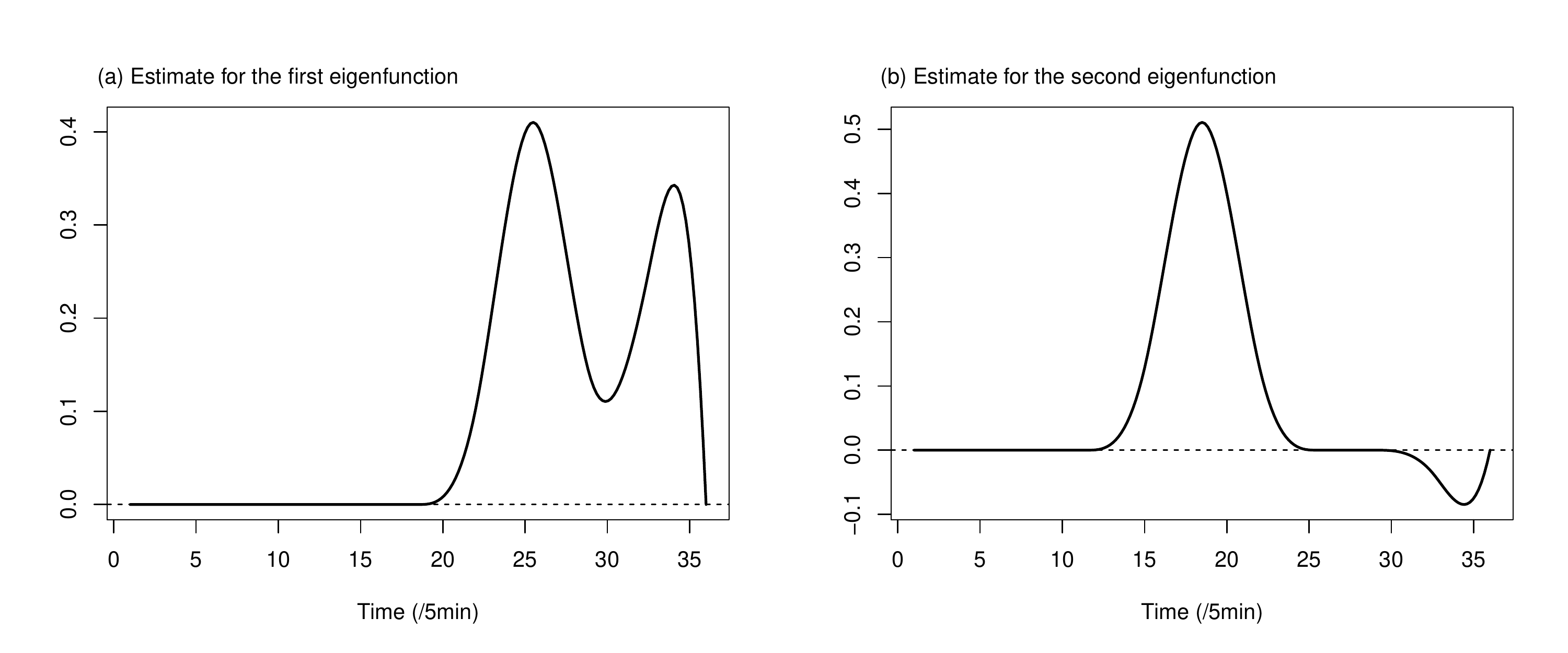}\\
  \caption{The estimates of the first two eigenfunctions for the physical activity data.}
  \label{realeig}
\end{figure}

The estimated eigenfunctions reveal some local sparse features, which facilitate interpretation for the results. The first eigenfunction highlights the variation after $t = 20$, while being zero on the remaining part. It implies that individuals experienced similar activity pattern with each other on $[0, 20]$, as the variation around mean function at that interval is ignorable. On the other hand, the variation after $t = 20$ is substantial, especially at $t = 25$ and $t = 34$. In this physical activity study, most of the participants started to take a one-hour exercises at $t=12$ based on training schedule. Thus, the variation after $t=20$ showed different activity pattern after about 30 minutes' intense exercises. Some individuals were still active with frequent interruption of sedentary behaviors, while others preferred sitting for a long time to have a rest. For the second eigenfunction, it is positive on $[15, 25]$ and it turns to be negative on $[30, 35]$, which indicates a negative association relationship between the observations on these two intervals. A possible explanation is that, the subject who were more active during the exercises with high frequency of sedentary behavior interruptions may have longer sitting time after exercises.

\section{ Conclusion and Discussion }\label{SecCon}

In this paper, we introduce a novel SLFPCA method for functional binary data and require the estimated FPCs to be able to capture the local sparse features of the original FPCs for the sake of interpretability. To this end, we construct a penalized Bernoulli likelihood with both roughness penalty and sparseness penalty. The sparseness penalty is crucial for the realization of local sparsity and we generalize the fSCAD penalty to our FPCA issue. The simulation study shows the superiority of SLFPCA and illustrates its encouraging identifying ability for non-zero subintervals. The practical application to the physical activity data suggests SLFPCA actually helps the interpretation a lot.

As it is the first try on sparse FPCA for binary data, there exists plenty of extensions in relevant field. First, other sparseness penalties, such as group bridge penalty \citep{wang2015functional, tu2020estimation} and LASSO penalty \citep{centofanti2020smooth}, can also be extended to the sparse FPCA problem. And it may be an interesting affair to explore the influences of choosing various sparseness penalties. Second, we presume identical sparseness tuning parameters for all considered eigenfunctions in our work. There may be cases where eigenfunctions meet different sparsity and thus distinct sparseness tuning parameters are necessary. The number of tuning parameter increases for these cases and such multiple tuning parameter selections would consume much computation time. Hence, a more effective method for selecting tuning parameters is in need. Third, the idea in this paper can be applied to functional data from other discrete distributions, such as Poisson distribution for functional count data, through altering the penalized likelihood corresponding to the distribution. It is worthwhile to develop adaptive algorithms for diverse distributions.

\section*{Acknowledgements}

This research was supported by Public Health $\&$ Disease Control and Prevention, Major Innovation $\&$ Planning Interdisciplinary Platform for the ``Double-First Class" Initiative, Renmin University of China.

\appendix
\section{ Proofs }

For simplicity of notation, we neglect mean function $\mu(t)$ here, that is $\bm{\mu} = 0$. The proofs can be easily generalized to the cases where $\bm{\mu} \neq \bm{0}$.

\subsection{Proof of Theorem \ref{theoryconsis}}

\begin{lemma}\label{lemma_Bspline}
(Approximation properties of B-splines) Assume function $f(t)$ satisfying $|f^{(p^{\prime})}(t_1) - f^{(p^{\prime})}(t_2)| \leq c |t_1 - t_2|^{\nu}, c > 0, \nu \in [0, 1]$. Then there is some $\wt{f}(t) = \sum_{l = 1}^L b_l B_l(t)$ such that $\|\wt{f}(t) - f(t)\|_{\infty} = O(K^{-r})$, where $\|\wt{f}(t) - f(t)\|_{\infty} = \sup_{t \in \mathcal{T}} |\wt{f}(t) - f(t)|$ and $r = p^{\prime} + \nu$.
\end{lemma}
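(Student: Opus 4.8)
The statement is a classical Jackson-type estimate for spline approximation (see de Boor or Schumaker), so my plan is to exhibit an explicit $\tilde{f}$ built from a \emph{quasi-interpolation} operator and then control the error by a local Taylor expansion. Concretely, let $Q$ be a quasi-interpolant mapping $C(\mathcal{T})$ into the span of $\{B_l\}_{l = 1}^L$ that is (i) bounded in the supremum norm uniformly in the knot sequence, $\|Qg\|_{\infty} \le C_0 \|g\|_{\infty}$, and (ii) exact on polynomials of degree at most $d$, i.e.\ $Q\pi = \pi$ whenever $\pi \in \mathbb{P}_d$. Such operators are standard and can be constructed from the dual functionals of the B-spline basis. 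I would set $\tilde{f} = Qf$ and exploit the \emph{compact support} of B-splines: for $t$ in a subinterval $I_m = [\tau_{m - 1}, \tau_m]$, the value $(Qf)(t)$ depends only on $f$ restricted to a neighborhood $J_m \supset I_m$ whose length is $O(h)$, where $h = \max_m (\tau_m - \tau_{m - 1})$.

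First, I would fix $t \in I_m$ and let $\pi$ be the Taylor polynomial of $f$ of degree $p^{\prime}$ expanded about a point of $I_m$. Since $f^{(p^{\prime})}$ is Hölder with exponent $\nu$, the integral form of the Taylor remainder gives $|f(s) - \pi(s)| \le C_1 h^{p^{\prime} + \nu} = C_1 h^{r}$ for every $s \in J_m$. Next, because $p^{\prime} \le r \le d$ we have $\pi \in \mathbb{P}_d$, so property (ii) yields $Q\pi = \pi$; combined with the triangle inequality this gives
\begin{align}
|f(t) - (Qf)(t)| &\le |f(t) - \pi(t)| + |(Q\pi)(t) - (Qf)(t)| \nonumber \\
&= |f(t) - \pi(t)| + |\{Q(\pi - f)\}(t)|. \nonumber
\end{align}
The first term is $O(h^r)$ by the remainder bound. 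For the second, the locality of $Q$ together with (i) bounds $|\{Q(\pi - f)\}(t)|$ by $C_0 \sup_{s \in J_m} |\pi(s) - f(s)| = O(h^r)$. Taking the supremum over all $t \in \mathcal{T}$ then gives $\|f - Qf\|_{\infty} = O(h^r)$, and since the knots here are (quasi-)uniform with $h \asymp 1/K$, this is $O(K^{-r})$, as claimed.

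The routine parts are the Taylor remainder estimate and the bookkeeping over the subintervals $I_m$. The one genuine obstacle is guaranteeing the existence of a quasi-interpolant enjoying both (i) and (ii) \emph{together with} a bound on $\|Q\|$ that does \emph{not} deteriorate as the number of knots $K$ grows; this uniform boundedness is exactly what allows the local error $O(h^r)$ to propagate to a global sup-norm bound without incurring an extra factor depending on $K$. Such uniformity is well established for quasi-uniform partitions, so I would invoke that body of results (de Boor, Schumaker) rather than reconstruct the operator from scratch.
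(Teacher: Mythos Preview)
Your argument is correct. The quasi-interpolant construction with polynomial reproduction, locality, and uniform boundedness is exactly the machinery that yields the Jackson estimate, and your Taylor-remainder bound under the H\"older assumption is fine (note that $\pi$ has degree $p'$, and $p' \le d$ follows from $r = p' + \nu \le d$ in Assumption~\ref{Ass_phi}, so $\pi \in \mathbb{P}_d$ and $Q\pi = \pi$ as you need).

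The paper, however, takes a much shorter route: it simply invokes de~Boor's Theorem~XII(6), which already gives a spline $\wt f$ with
\[
\|\wt f - f\|_\infty \le C_0\, h^{p'}\, w(f^{(p')}; h),
\]
where $w(\cdot;h)$ is the modulus of continuity and $h = O(K^{-1})$ is the mesh width, and then bounds $w(f^{(p')};h) \le C_1 h^{\nu}$ directly from the H\"older condition (citing de~Boor's Theorem~II(21)). Two lines of citation replace your entire quasi-interpolant argument. What you have written is essentially a sketch of the proof \emph{of} Theorem~XII(6), so the two approaches are not mathematically different---yours unpacks the black box, the paper's keeps it closed. Your version has the pedagogical advantage of exposing the mechanism (local polynomial reproduction plus uniform operator norm), and it makes explicit where quasi-uniformity of the knots is used; the paper's version is more economical for a lemma that is, as you say, classical.
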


\begin{proof}[Proof of Lemma \ref{lemma_Bspline}]
Let $w(f; h) = \sup \{|f(t) - f(s)|: t, s \in \mathcal{T}, |t - s| \leq h \}$. According to Theorem \RNum{12} (6) in \citet{boor1978a}, there exists some $\wt{f}(t) = \sum_{l = 1}^L b_l B_l(t)$ such that
\begin{align}
\|\wt{f} - f\|_{\infty} \leq C_0 \cdot h^{p^{\prime}} \cdot w(f^{(p^{\prime})}; h),  \nonumber
\end{align}
where $h$ is the distance between the adjacent knots, thus $h = O(K^{-1})$. Further, as $|f^{(p^{\prime})}(t_1) - f^{(p^{\prime})}(t_2)| \leq c |t_1 - t_2|^{\nu}, c > 0, \nu \in [0, 1]$, we have $f^{(p^{\prime})}$ satisfies a H\"{o}lder condition with component $\nu$. Hence, according to Theorem \RNum{2} (21) in \citet{boor1978a}, we have
\begin{align}
w(f^{(p^{\prime})}; h) \leq C_1 h^{\nu}, \nonumber
\end{align}
where $C_1$ is some constant. Therefore,
\begin{align}
\|\wt{f} - f\|_{\infty} \leq C_0 C_1 h^{p^{\prime} + \nu} = C_0 C_1 h^{r} = O(K^{-r}).  \nonumber
\end{align}
The proof is completed.
\end{proof}

\begin{proof}[Proof of Theorem \ref{theoryconsis}]

Let $\bm{\Omega} = (\bm{\theta}_1^\top, \ldots, \bm{\theta}_p^\top)^\top$. Then our objective function is equivalent to
\begin{align}
\wt{Q}(\bm{\Omega}, \bm{\xi}_0) = -\frac{1}{N} L(\bm{\Omega}, \bm{\xi}_0) + \kappa_{\bm{\theta}} \sum_{k = 1}^p \bm{\theta}_k^\top V \bm{\theta}_k + \mbox{PEN}_{\lambda}(\bm{\Theta}), \nonumber
\end{align}
where
\begin{align}
L(\bm{\Omega}, \bm{\xi}_0) = \sum_{i = 1}^n \sum_{j = 1}^{m_i} \log \pi \{q_{ij} (\textbf{B}_{ij}^\top \bm{\Theta}^\top \bm{\xi}_{0i}) \}. \nonumber
\end{align}
Let $\alpha_N = N^{-1/2} K$.We want to show that for any $\epsilon > 0$, $\exists C_1 > 0$, $\forall C > C_1$,
\begin{align}\label{theorygoal}
P \Big \{ \inf_{\|\bm{u}\|_2 = C} \wt{Q}(\bm{\Omega}_0 + \alpha_N \bm{u}, \bm{\xi}_0) > \wt{Q}(\bm{\Omega}_0, \bm{\xi}_0) \Big \} \geq 1 - \epsilon,
\end{align}
where $\bm{\Omega}_0 = (\bm{\theta}_{01}^\top, \ldots, \bm{\theta}_{0p}^\top)^\top$ is the true parameter.
It indicates there exists a local minimizer in the ball $\{\bm{\Omega}_0 + \alpha_N \bm{u}: \|\bm{u}\|_2 \leq C\}$, with probability at least $1 - \epsilon$. Moreover, the local minimizer satisfies $\|\wh{\bm{\Omega}} - \bm{\Omega}_0\|_2 = O_p (\alpha_N)$, where $\wh{\bm{\Omega}} = (\wh{\bm{\theta}}_1^\top, \ldots, \wh{\bm{\theta}}_p^\top)^\top$.

In specific, let $\bm{u} = (\bm{u}_1^\top, \ldots, \bm{u}_p^\top)^\top$ and $\wt{\phi}_k(t) = \bm{B}^\top (t) \bm{\theta}_{0k}$. We have
\begin{align}
D_N (\bm{u}) \triangleq& \wt{Q}(\bm{\Omega}_0 + \alpha_N \bm{u}, \bm{\xi}_0) - \wt{Q}(\bm{\Omega}_0, \bm{\xi}_0) \nonumber \\
=& -\frac{1}{N} \{L(\bm{\Omega}_0 + \alpha_N \bm{u}, \bm{\xi}_0) - L(\bm{\Omega}_0, \bm{\xi}_0)\} \nonumber \\
&+ \kappa_{\bm{\theta}} \sum_{k = 1}^p \{(\bm{\theta}_{0k} + \alpha_N\bm{u}_k)^\top V (\bm{\theta}_{0k} + \alpha_N\bm{u}_k) - \bm{\theta}_{0k}^\top V \bm{\theta}_{0k} \} \nonumber \\
&+ \{\mbox{PEN}_{\lambda} (\bm{\Theta}_{0 \alpha_N}) - \mbox{PEN}_{\lambda} (\bm{\Theta}_0)\} \nonumber \\
=& -\frac{1}{N} \{L(\bm{\Omega}_0 + \alpha_N \bm{u}, \bm{\xi}_0) - L(\bm{\Omega}_0, \bm{\xi}_0)\} \nonumber \\
&+ \kappa_{\bm{\theta}} \sum_{k = 1}^p \{(\bm{\theta}_{0k} + \alpha_N\bm{u}_k)^\top V (\bm{\theta}_{0k} + \alpha_N\bm{u}_k) - \bm{\theta}_{0k}^\top V \bm{\theta}_{0k} \} \nonumber \\
&+ \frac{K + 1}{8T} \sum_{k = 1}^{p} \int_{\mathcal{T}} \Big \{ p_{\lambda} \Big ( \big |\bm{B}^\top (t) (\bm{\theta}_{0k} + \alpha_N \bm{u}_k) \big | \Big ) - p_{\lambda} \Big ( \big |\bm{B}^\top (t) \bm{\theta}_{0k} \big | \Big ) \Big \} dt \nonumber \\
\geq& -\frac{1}{N} \{L(\bm{\Omega}_0 + \alpha_N \bm{u}, \bm{\xi}_0) - L(\bm{\Omega}_0, \bm{\xi}_0)\} \nonumber \\
&+ \kappa_{\bm{\theta}} \sum_{k = 1}^p \{(\bm{\theta}_{0k} + \alpha_N\bm{u}_k)^\top V (\bm{\theta}_{0k} + \alpha_N\bm{u}_k) - \bm{\theta}_{0k}^\top V \bm{\theta}_{0k} \} \nonumber \\
&+ \frac{K + 1}{8T} \sum_{k = 1}^{p} \int_{\mathcal{G}_k} \Big \{ p_{\lambda} \Big ( \big |\bm{B}^\top (t) (\bm{\theta}_{0k} + \alpha_N \bm{u}_k) \big | \Big ) - p_{\lambda} \Big ( \big |\bm{B}^\top (t) \bm{\theta}_{0k} \big | \Big ) \Big \} dt \nonumber \\
\triangleq& \Delta_1 + \Delta_2 + \Delta_3, \nonumber
\end{align}
where $\bm{\Theta}_{0\alpha_N}$ is obtained from $\bm{\theta}_{0k} + \alpha_N \bm{u}_k, k = 1, \ldots, p$, and $\mathcal{G}_k = \mbox{SUPP}(\wt{\phi}_k)$.
For $\Delta_1$, according to the Taylor expansion, we have
\begin{align}
\Delta_1 = -\frac{1}{N} [\alpha_N \nabla^\top L(\bm{\Omega}_0, \bm{\xi}_0) \bm{u} + \frac{1}{2} \bm{u}^\top \nabla^2 L(\bm{\Omega}_0, \bm{\xi}_0) \bm{u} \alpha_N^2 \{1 + o_p(1)\}], \nonumber
\end{align}
where
\begin{align}
\Big |-\frac{1}{N} \alpha_N \nabla^\top L(\bm{\Omega}_0, \bm{\xi}_0) \bm{u} \Big | &\leq \frac{1}{N} \alpha_N \|\nabla^\top L(\bm{\Omega}_0, \bm{\xi}_0)\|_2 \|\bm{u}\|_2 \nonumber \\
&= O_p(\alpha_N N^{-1/2} K^{1/2}) \|\bm{u}\|_2, \nonumber
\end{align}
\begin{align}\label{dominate}
-\frac{1}{2N} \bm{u}^\top \nabla^2 L(\bm{\Omega}_0, \bm{\xi}_0) \bm{u} \alpha_N^2 = \frac{\alpha_N^2}{2} \bm{u}^\top I(\bm{\Omega}_0, \bm{\xi}_0) \bm{u} + o_p(\alpha_N^2).
\end{align}
The derivation of (\ref{dominate}) is provided in the proof of Lemma \ref{dominateDer}.
For $\Delta_2$, we have
\begin{align}
&\kappa_{\bm{\theta}}\{(\bm{\theta}_{0k} + \alpha_N \bm{u}_k)^\top V (\bm{\theta}_{0k} + \alpha_N \bm{u}_k) - \bm{\theta}_{0k}^\top V \bm{\theta}_{0k}\} \nonumber \\
=& 2 \kappa_{\bm{\theta}} \alpha_N \bm{\theta}_{0k}^\top V \bm{u}_k + \kappa_{\bm{\theta}} \alpha_N^2 \bm{u}_k^\top V \bm{u}_k \nonumber \\
\leq& 2 \kappa_{\bm{\theta}} \alpha_N \|\bm{\theta}_{0k}^\top\|_2 \|V\|_2 \|\bm{u}_k\|_2 + \kappa_{\bm{\theta}} \alpha_N^2 \|\bm{u}_k^\top \|_2 \| V \|_2 \| \bm{u}_k \|_2 \nonumber \\
=& o(N^{-1/2} \alpha_N K) \|\bm{u}_k\|_2 + o(N^{-1/2} \alpha_N^2 K) \|\bm{u}_k\|_2^2 \nonumber \\
=& o(N^{-1/2} K \alpha_N) \|\bm{u}_k\|_2. \nonumber
\end{align}
Therefore,
\begin{align}
\Delta_2 = o(N^{-1/2} K \alpha_N) \Big ( \sum_{k = 1}^p \|\bm{u}_k\|_2 \Big ). \nonumber
\end{align}
For $\Delta_3$, we have
\begin{align}
& \frac{K + 1}{8T} \sum_{k = 1}^{p} \int_{\mathcal{G}_k} \Big \{ p_{\lambda} \Big ( \big |\bm{B}^\top (t) (\bm{\theta}_{0k} + \alpha_N \bm{u}_k) \big | \Big ) - p_{\lambda} \Big ( \big |\bm{B}^\top (t) \bm{\theta}_{0k} \big | \Big ) \Big \} dt \nonumber \\
=& \frac{K + 1}{8T} \sum_{k = 1}^{p} \int_{\mathcal{G}_k} \Big [ \alpha_N \nabla^\top \Big \{p_{\lambda} \Big (\big |\bm{B}^\top (t) \bm{\theta}_{0k} \big | \Big ) \Big \} \bm{u}_k \nonumber \\
& \qquad \qquad \qquad \qquad \qquad + \frac{\alpha_N^2}{2} \bm{u}_k^\top \nabla^2 \Big \{p_{\lambda} \Big (\big |\bm{B}^\top (t) \bm{\theta}_{0k} \big | \Big ) \Big \} \bm{u}_k \Big ] dt \nonumber \\
=& \frac{K + 1}{8T} \alpha_N \sum_{k = 1}^p \nabla^\top \Big \{ \int_{\mathcal{G}_k} p_{\lambda} \Big (\big |\bm{B}^\top (t) \bm{\theta}_{0k} \big | \Big ) dt \Big \} \bm{u}_k \nonumber \\
& \qquad \qquad \qquad \qquad \qquad + \frac{K + 1}{8T} \frac{\alpha_N^2}{2} \sum_{k = 1}^p \bm{u}_k^\top \nabla^2 \Big \{ \int_{\mathcal{G}_k} p_{\lambda} \Big (\big |\bm{B}^\top (t) \bm{\theta}_{0k} \big | \Big ) dt \Big \} \bm{u}_k \nonumber \\
\leq& O_p(\alpha_N N^{-1/2} K^{-1/2}) \Big ( \sum_{k = 1}^p \|\bm{u}_k\|_2 \Big ) + o_p(\alpha_N^2 K^{-1}) \Big ( \sum_{k = 1}^p \bm{u}_k^\top Z \bm{u}_k \Big ), \nonumber
\end{align}
where $Z$ is a sparse matrix with $1$ in the location $(i, j)$ such that $0 \leq |i - j| \leq 4$. The derivation of the last inequality makes use of the results in \citet{lin2017locally}, that is
\begin{align}
\Big \| \nabla^\top \Big \{ \int_{\mathcal{G}_k} p_{\lambda} \Big (\big |\bm{B}^\top (t) \bm{\theta}_{0k} \big | \Big ) dt \Big \} \Big \|_2 = O(N^{-1/2} K^{-3/2}) \nonumber
\end{align}
\begin{align}
\nabla^2 \Big \{ \int_{\mathcal{G}_k} p_{\lambda} \Big (\big |\bm{B}^\top (t) \bm{\theta}_{0k} \big | \Big ) dt \Big \} = o(K^{-2}) Z. \nonumber
\end{align}
Allowing $\|\bm{u}\|_2$ to be large enough, all terms are dominated by the second term of $\Delta_1$. Therefore, we obtain (\ref{theorygoal}) according to (\ref{dominate}). Thus $\|\wh{\bm{\theta}}_k - \bm{\theta}_{0k}\|_2 = O_p(N^{-1/2} K), k = 1, \ldots, p$.

According to Lemma \ref{lemma_Bspline}, we have $\|\wt{\phi}_k - \phi_k\|_{\infty} = O(K^{-r})$. Then
\begin{align}
\|\wh{\phi}_k - \phi_k\|_{\infty} &\leq \|\wh{\phi}_k - \wt{\phi}_k\|_{\infty} + \|\wt{\phi}_k - \phi_k\|_{\infty} \nonumber \\
&= \|(\wh{\bm{\theta}}_k - \bm{\theta}_{0k})^\top \bm{B} \|_{\infty} + \|\wt{\phi}_k - \phi_k\|_{\infty} \nonumber \\
&\leq \|\wh{\bm{\theta}}_k - \bm{\theta}_{0k} \|_{\infty} \sup_t \sum_{j = 1}^L |B_j(t)| + \|\wt{\phi}_k - \phi_k\|_{\infty} \nonumber \\
&= \|\wh{\bm{\theta}}_k - \bm{\theta}_{0k} \|_{\infty} + \|\wt{\phi}_k - \phi_k\|_{\infty} \nonumber \\
&\leq \|\wh{\bm{\theta}}_k - \bm{\theta}_{0k} \|_2 + \|\wt{\phi}_k - \phi_k\|_{\infty} \nonumber \\
&= O_p(N^{-1/2} K) + O(K^{-r}) \nonumber \\
&= O_p(N^{-1/2} K). \nonumber
\end{align}
The last equality is obtained from Assumption \ref{Ass_tuning}. The proof is completed.

\end{proof}

\begin{lemma}\label{dominateDer}
Under Assumptions \ref{Ass_phi} - \ref{Ass_tuning}, we have (\ref{dominate}).
\end{lemma}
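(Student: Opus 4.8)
The plan is to reduce (\ref{dominate}) to a law of large numbers for the normalized Hessian. Dividing both sides by $\alpha_N^2/2$, the claim is equivalent to
\[
-\frac{1}{N}\bm{u}^\top \nabla^2 L(\bm{\Omega}_0, \bm{\xi}_0)\bm{u} = \bm{u}^\top I(\bm{\Omega}_0, \bm{\xi}_0)\bm{u} + o_p(1),
\]
where $I(\bm{\Omega}_0,\bm{\xi}_0)$ denotes the normalized (expected) Fisher information. First I would compute the Hessian explicitly. Since $X_{ij} = \sum_{k=1}^p \xi_{0ik}\,\bm{B}_{ij}^\top \bm{\theta}_k$ is linear in $\bm{\Omega}$, the chain rule gives $\nabla^2 L = \sum_{i,j} g''(X_{ij})\,\bm{d}_{ij}\bm{d}_{ij}^\top$, where $\bm{d}_{ij}$ is the $pL$-vector whose $k$-th block is $\xi_{0ik}\bm{B}_{ij}$. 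The key logistic identity is that for $q=\pm 1$,
\[
\frac{d^2}{dX^2}\log\pi(qX) = -\,q^2\,\pi(qX)\{1-\pi(qX)\} = -\,\pi(X)\{1-\pi(X)\},
\]
using $q^2=1$ and $\pi(-v)=1-\pi(v)$. Hence the second derivative is free of the random outcomes $q_{ij}$ (observed and expected information coincide), and
\[
-\frac{1}{N}\nabla^2 L(\bm{\Omega}_0,\bm{\xi}_0) = \frac{1}{N}\sum_{i=1}^n\sum_{j=1}^{m_i} w_{ij}\,\bm{d}_{ij}\bm{d}_{ij}^\top,\qquad w_{ij}=\pi(X_{ij,0})\{1-\pi(X_{ij,0})\}\in[0,\tfrac14],
\]
a positive semidefinite matrix whose only randomness (the scores being given) enters through the random design $\{t_{ij}\}$ and observation counts $m_i$.

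Next I would identify the limit and set up the concentration. Writing $\bm{u}=(\bm{u}_1^\top,\dots,\bm{u}_p^\top)^\top$ and $g_i=\sum_{k=1}^p \xi_{0ik}\bm{u}_k$, we have $\bm{d}_{ij}^\top\bm{u}=\bm{B}_{ij}^\top g_i$, so that
\[
-\frac{1}{N}\bm{u}^\top\nabla^2 L\,\bm{u} = \frac{1}{N}\sum_{i=1}^n\sum_{j=1}^{m_i} w_{ij}\,(\bm{B}_{ij}^\top g_i)^2 .
\]
I would define $I(\bm{\Omega}_0,\bm{\xi}_0)$ through $\bm{u}^\top I\bm{u}=\frac{1}{N}\sum_{i,j}E\{w_{ij}(\bm{B}_{ij}^\top g_i)^2\}$, the conditional expectation over the design, and then bound the centered sum by Chebyshev's inequality. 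The summands are independent across subjects (and, within a subject, across the $t_{ij}$), with $w_{ij}$ bounded, so the variance reduces to controlling $E\{(\bm{B}_{ij}^\top g_i)^4\}$ and $E\{(\bm{B}_{ij}^\top g_i)^2\}$. Here the B-spline compact support is essential: each $\bm{B}_{ij}$ has at most $d+1$ nonzero entries with uniformly bounded sup-norm, and the Gram scaling $\int_{\mathcal T}\bm{B}\bm{B}^\top dt=O(K^{-1})$ governs the size of these moments, while moment bounds on the given scores control $\|g_i\|_2$.

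The main obstacle is that this is a law of large numbers in \emph{increasing dimension}: the parameter length $pL=p(K+d+1)$ grows with $N$, so the variance of the quadratic form must be shown to be $o(1)$ uniformly despite $K\to\infty$. I would exploit the banded block structure of $\sum_{i,j}w_{ij}\bm{d}_{ij}\bm{d}_{ij}^\top$ (a consequence of compact support, giving only $O(1)$ nonzero bands per row, hence the sparse pattern $Z$ appearing in $\Delta_3$) together with the knot-growth condition $K=o(N^{1/4})$ from Assumption \ref{Ass_tuning} to force the variance to vanish, so that Chebyshev delivers the $o_p(1)$ remainder. Finally, because (\ref{dominate}) is invoked inside Theorem \ref{theoryconsis} at the random $\bm{u}$ attaining the infimum over $\|\bm{u}\|_2=C$, I would upgrade the pointwise conclusion to uniform convergence on the sphere, either by a covering/equicontinuity argument on the quadratic form $\bm{u}\mapsto \bm{u}^\top(\cdot)\bm{u}$ or, more cleanly, by establishing the operator-norm statement $\|\frac1N\nabla^2 L+I\|_2=o_p(1)$. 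The positive definiteness of $I$ and the $K$-dependent scaling of its eigenvalues are what make this term dominate the linear part of $\Delta_1$ and the contributions $\Delta_2,\Delta_3$ in the surrounding proof, though for the lemma itself only the displayed law of large numbers is required.
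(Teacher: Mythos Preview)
Your approach is correct and lands on essentially the paper's argument: both reduce the claim to a concentration statement for the normalized Hessian and obtain it via Chebyshev's inequality together with $K=o(N^{1/4})$ from Assumption~\ref{Ass_tuning}. The paper bypasses your pointwise-in-$\bm{u}$ step and works directly with the Frobenius norm of the matrix, showing $\big\|\tfrac{1}{N}\nabla^2 L(\bm{\Omega}_0,\bm{\xi}_0)+I(\bm{\Omega}_0,\bm{\xi}_0)\big\|=o_p(K^{-1})$ by summing the entrywise variances (each bounded by $NC_2$) over the $(pK)^2$ entries to get $p^2C_2K^4/(N\epsilon^2)\to 0$; this is exactly the ``operator-norm statement'' you name at the end as the cleaner alternative, so the two routes coincide. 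Your additional observation that the logistic second derivative is free of the outcomes $q_{ij}$ (observed information equals expected information) is correct and a nice simplification the paper does not make explicit; it pins down that, with $\bm{\xi}_0$ given, the only randomness in the Hessian enters through the design.
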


\begin{proof}[Proof of Lemma \ref{dominateDer}]
We have
\begin{align}
&-\frac{1}{2N} \bm{u}^\top \nabla^2 L(\bm{\Omega}_0, \bm{\xi}_0) \bm{u} \alpha_N^2 \nonumber \\
=& -\frac{1}{2} \bm{u}^\top \Big \{ \frac{1}{N} \nabla^2 L(\bm{\Omega}_0, \bm{\xi}_0) + I(\bm{\Omega}_0, \bm{\xi}_0) \Big \} \bm{u} \alpha_N^2 + \frac{\alpha_N^2}{2} \bm{u}^\top I(\bm{\Omega}_0, \bm{\xi}_0) \bm{u}. \nonumber
\end{align}
According to Chebyshev's inequality, for any $\epsilon > 0$, as $K = o(N^{1/4})$,
\begin{align}
&P \Big \{ \Big \|\frac{1}{N} \nabla^2 L(\bm{\Omega}_0, \bm{\xi}_0) + I(\bm{\Omega}_0, \bm{\xi}_0) \Big \| \geq \frac{\epsilon}{K} \Big \} \nonumber \\
\leq& \frac{K^2}{\epsilon^2} E \Big \{ \Big \|\frac{1}{N} \nabla^2 L(\bm{\Omega}_0, \bm{\xi}_0) + I(\bm{\Omega}_0, \bm{\xi}_0) \Big \|^2 \Big \} \nonumber \\
=& \frac{K^2}{\epsilon^2} E \Big [ \Big \|\frac{1}{N} \nabla^2 L(\bm{\Omega}_0, \bm{\xi}_0) -\frac{1}{N} E \{ \nabla^2 L(\bm{\Omega}_0, \bm{\xi}_0) \} \Big \|^2 \Big ] \nonumber \\
=& \frac{K^2}{N^2 \epsilon^2} E \Big [ \Big \| \nabla^2 L(\bm{\Omega}_0, \bm{\xi}_0) - E \{ \nabla^2 L(\bm{\Omega}_0, \bm{\xi}_0) \} \Big \|^2 \Big ] \nonumber \\
=& \frac{K^2}{N^2 \epsilon^2} \sum_{i, j = 1}^{pK} E \Big ( \frac{\partial^2 L(\bm{\Omega}_0, \bm{\xi}_0)}{\partial \theta_i \partial \theta_j } - E \frac{\partial^2 L(\bm{\Omega}_0, \bm{\xi}_0)}{\partial \theta_i \partial \theta_j } \Big )^2 \nonumber \\
=& \frac{K^2}{N^2 \epsilon^2} \sum_{i, j = 1}^{pK} \mbox{var} \Big ( \frac{\partial^2 L(\bm{\Omega}_0, \bm{\xi}_0)}{\partial \theta_i \partial \theta_j } \Big ) \nonumber \\
\leq& \frac{K^2}{N^2 \epsilon^2} \cdot p^2 K^2 N C_2 \nonumber \\
=& \frac{p^2 C_2 K^4}{N \epsilon^2} \rightarrow 0, \nonumber
\end{align}
where $C_2$ is a constant and $\mbox{var} \Big ( \frac{\partial^2 L(\bm{\Omega}_0, \bm{\xi}_0)}{\partial \theta_i \partial \theta_j } \Big )$ is bounded by $N C_2$.
That means $\Big \|\frac{1}{N} \nabla^2 L(\bm{\Omega}_0, \bm{\xi}_0) + I(\bm{\Omega}_0, \bm{\xi}_0) \Big \| = o_p(K^{-1})$.
Then we have
\begin{align}
-\frac{1}{2N} \bm{u}^\top \nabla^2 L(\bm{\Omega}_0, \bm{\xi}_0) \bm{u} \alpha_N^2 = \frac{\alpha_N^2}{2} \bm{u}^\top I(\bm{\Omega}_0, \bm{\xi}_0) \bm{u} + o_p(\alpha_N^2). \nonumber
\end{align}

\end{proof}

\subsection{Proof of Theorem \ref{theorySpar}}

Define
\begin{align}
\mathcal{T}_k^{(1)} &= \{t \in \mathcal{T}: |\phi_k (t)| > a C (\lambda + K^{-r})\}, \nonumber \\
\mathcal{T}_k^{(2)} &= \{t \in \mathcal{T}: \phi_k (t) = 0\}, \nonumber \\
\mathcal{T}_k^{(3)} &= \mathcal{T} - \mathcal{T}_k^{(1)} - \mathcal{T}_k^{(2)}. \nonumber
\end{align}
We further define $\mathcal{S}_l = \mbox{SUPP}(B_l), l = 1, \ldots, L$. Let $\mathcal{A}_k^{(j)} = \{ l: \mathcal{S}_l \subset \mathcal{T}_k^{(j)} \}, j = 1, 2$ and $\mathcal{A}_k^{(3)} = \{1, \ldots, L\} - \mathcal{A}_k^{(1)} - \mathcal{A}_k^{(2)}$.

\begin{proof}[Proof of Theorem \ref{theorySpar}]
Consider $\theta_{kl}$, where $l \in \mathcal{A}_k^{(2)}$. We have
\begin{align}
\frac{\partial \wt{Q} (\wh{\bm{\Omega}}, \bm{\xi}_0)}{\partial \theta_{kl}} =& - \frac{1}{N} \frac{\partial L(\wh{\bm{\Omega}}, \bm{\xi}_0)}{\partial \theta_{kl}} + \kappa_{\bm{\theta}} \frac{\partial (\bm{\theta}_k^\top V \bm{\theta}_k)}{\partial \theta_{kl}} \Big |_{\bm{\theta}_k = \wh{\bm{\theta}}_k} \nonumber \\
&+ \frac{K + 1}{8T} \int_{\mathcal{T}} p_{\lambda}^{\prime} (|\bm{B}^\top (t) \bm{\theta}_k|) \Big |_{\bm{\theta}_k = \wh{\bm{\theta}}_k} B_l(t) \mbox{sgn} (\wh{\theta}_{kl}) dt \nonumber \\
=& -\frac{1}{N} \Big \{ \frac{\partial L(\bm{\Omega}_0, \bm{\xi}_0)}{\partial \theta_{kl}} + \frac{1}{2} \sum_{g = 1}^L \frac{\partial^2 L(\bm{\Omega}_0, \bm{\xi}_0)}{\partial \theta_{kl} \partial \theta_{kg}} (\wh{\theta}_{kg} - \theta_{0kg}) \Big \} \nonumber \\
&+ \kappa_{\bm{\theta}} \frac{\partial (\bm{\theta}_k^\top V \bm{\theta}_k)}{\partial \theta_{kl}} \Big |_{\bm{\theta}_k = \wh{\bm{\theta}}_k} \nonumber \\
&+ \frac{K + 1}{8T} \mbox{sgn} (\wh{\theta}_{kl}) \int_{\mathcal{S}_l} p_{\lambda}^{\prime} (|\bm{B}^\top (t) \bm{\theta}_k|) \Big |_{\bm{\theta}_k = \wh{\bm{\theta}}_k} B_l(t) dt. \nonumber
\end{align}
Then
\begin{align}
&\Bigg |\lambda^{-1} \frac{\partial \wt{Q} (\wh{\bm{\Omega}}, \bm{\xi}_0)}{\partial \theta_{kl}} - \frac{K + 1}{8T} \mbox{sgn} (\wh{\theta}_{kl}) \int_{\mathcal{S}_l} \lambda^{-1} p_{\lambda}^{\prime} (|\bm{B}^\top (t) \bm{\theta}_k|) \Big |_{\bm{\theta}_k = \wh{\bm{\theta}}_k} B_l(t) dt \Bigg | \nonumber \\
=& \Bigg | - \lambda^{-1} \frac{1}{N} \Big \{ \frac{\partial L(\bm{\Omega}_0, \bm{\xi}_0)}{\partial \theta_{kl}} + \frac{1}{2} \sum_{g = 1}^L \frac{\partial^2 L(\bm{\Omega}_0, \bm{\xi}_0)}{\partial \theta_{kl} \partial \theta_{kg}} (\wh{\theta}_{kg} - \theta_{0kg}) \Big \} \nonumber \\
& \qquad \qquad \qquad \qquad \qquad \qquad \qquad \qquad \qquad \qquad + \lambda^{-1} \kappa_{\bm{\theta}} \frac{\partial (\bm{\theta}_k^\top V \bm{\theta}_k)}{\partial \theta_{kl}} \Big |_{\bm{\theta}_k = \wh{\bm{\theta}}_k} \Bigg | \nonumber \\
\leq& O_p(N^{-1/2} K^{3/2} \lambda^{-1}) + o_p(N^{-1/2} \lambda^{-1}) \rightarrow 0. \label{sign}
\end{align}
As
\begin{align}
\mathop{\lim \inf}_{N \rightarrow \infty} \mathop{\lim \inf}_{x \rightarrow 0^{+}} \lambda^{-1} p_{\lambda}^{\prime}(x) > 0, \nonumber
\end{align}
the sign of $\frac{K + 1}{8T} \mbox{sgn} (\wh{\theta}_{kl}) \int_{\mathcal{S}_l} \lambda^{-1} p_{\lambda}^{\prime} (|\bm{B}^\top (t) \bm{\theta}_k|) \Big |_{\bm{\theta}_k = \wh{\bm{\theta}}_k} B_l(t) dt$ is determined by $\wh{\theta}_{kl}$. Hence, the sign of $\frac{\partial \wt{Q} (\wh{\bm{\Omega}}, \bm{\xi}_0)}{\partial \theta_{kl}}$ is determined by $\wh{\theta}_{kl}$ according to (\ref{sign}). Since $\wh{\theta}_{kl}$ is the local minimizer of $\wt{Q} (\bm{\Omega}, \bm{\xi}_0)$, we have $\frac{\partial \wt{Q} (\wh{\bm{\Omega}}, \bm{\xi}_0)}{\partial \theta_{kl}} = 0$, thus $\wh{\theta}_{kl} = 0$. That means $\wh{\theta}_{kl} = 0$ for all $l \in \mathcal{A}_k^{(2)}$ in probability.

Define $\wh{\mathcal{A}}_k^{(2)} = \{l \in \mathcal{A}_k^{(2)}: \wh{\theta}_{kl} = 0\}$. We have $\wh{\mathcal{A}}_k^{(2)} = \mathcal{A}_k^{(2)}$ in probability. Moreover, $\bigcup_{l \in \wh{\mathcal{A}}_k^{(2)}} \mathcal{S}_l = \bigcup_{l \in \mathcal{A}_k^{(2)}} \mathcal{S}_l$ in probability. Since $\bigcup_{l \in \mathcal{A}_k^{(2)}} \mathcal{S}_l$ converges to $\mbox{NULL}(\phi_k)$ as $K \rightarrow \infty$ according to the compact support property of B-spline basis, we have \begin{align}\label{union_con}
\bigcup_{l \in \wh{\mathcal{A}}_k^{(2)}} \mathcal{S}_l \rightarrow \mbox{NULL}(\phi_k)
\end{align}
in probability.

We further want to show $\mathcal{T}_k^{(1)} \subset \mbox{SUPP}(\wh{\phi}_k)$ in probability. By Theorem \ref{theoryconsis}, $\|\wh{\phi}_k - \phi_k\|_{\infty} = O_p(N^{-1/2} K + K^{-r}) = O_p(\lambda + K^{-r})$. Thus for any $\epsilon > 0$, there exists some constant $C_4 > 0$ such that $P \{|\wh{\phi}_k(t) - \phi_k(t)| \leq C_4 a (\lambda + K^{-r}), t \in \mathcal{T}_k^{(1)} \} > 1 - \epsilon$. Let $C = 2C_4$ and making use of the definition of $\mathcal{T}_k^{(1)}$, we have
\begin{align}
P\{|\wh{\phi}_k(t)| \geq C_4 a (\lambda + K^{-r}), t \in \mathcal{T}_k^{(1)} \} > 1 - \epsilon. \nonumber
\end{align}
Since $C_4 a (\lambda + K^{-r}) > 0$, we have $\mathcal{T}_k^{(1)} \subset \mbox{SUPP}(\wh{\phi}_k)$ in probability. Thus $\mbox{NULL}(\wh{\phi}_k) \subset \mathcal{T}_k^{(2)} \cup \mathcal{T}_k^{(3)}$. Further, as $N \rightarrow \infty$ and $K \rightarrow \infty$, we have
\begin{align}\label{subset_relation}
\bigcup_{l \in \wh{\mathcal{A}}_k^{(2)}} \mathcal{S}_l \subset \mbox{NULL}(\wh{\phi}_k) \subset \mathcal{T}_k^{(2)} \cup \mathcal{T}_k^{(3)} = \mbox{NULL}(\phi_k) \cup \mathcal{T}_k^{(3)}.
\end{align}
By (\ref{union_con}), (\ref{subset_relation}) and the fact that $\mathcal{T}_k^{(3)}$ converges to $\emptyset$, we have $\mbox{NULL}(\wh{\phi}_k) \rightarrow \mbox{NULL}(\phi_k)$ and $\mbox{SUPP}(\wh{\phi}_k) \rightarrow \mbox{SUPP}(\phi_k)$ in probability. The proof is completed.

\end{proof}

\subsection{Proof of Theorem \ref{theoryXi}}

\begin{proof}[Proof of Theorem \ref{theoryXi}]
For $\bm{\xi}_i$, the objective function is
\begin{align}
Q^{\ast}(\bm{\Theta}_0, \bm{\xi}_i) = - \sum_{j = 1}^{m_i} \log \pi \{q_{ij} (\textbf{B}_{ij}^\top \bm{\Theta}_0^\top \bm{\xi}_i) \}. \nonumber
\end{align}
Let $\beta_M = M^{-1/2}$. We want to show that for any $\epsilon > 0$, $\exists C_3 > 0$, $\forall D > C_3$, such that
\begin{align}\label{theorygoalxi}
P \Big \{ \inf_{\|\bm{u}\|_2 = D} Q^{\ast}(\bm{\Theta}_0, \bm{\xi}_{0i} + \beta_M \bm{u}) > Q^{\ast}(\bm{\Theta}_0, \bm{\xi}_{0i}) \Big \} \geq 1 - \epsilon,
\end{align}
where $\bm{\xi}_{0i} = (\xi_{i1}, \ldots, \xi_{ip})^\top$ is the true parameter. It indicates there exists a local minimizer in the ball $\{\bm{\xi}_{0i} + \beta_M \bm{u}: \|\bm{u}\|_2 \leq D\}$, with probability at least $1 - \epsilon$. Moreover, the local minimizer satisfies $\|\wh{\bm{\xi}}_{i} - \bm{\xi}_{0i}\|_2 = O_p (\beta_M)$.

Similarly, define $D_M(\bm{u}) = Q^{\ast}(\bm{\Theta}_0, \bm{\xi}_{0i} + \beta_M \bm{u}) - Q^{\ast}(\bm{\Theta}_0, \bm{\xi}_{0i})$. Then
\begin{align}
D_M(\bm{u}) &= - \Big [ \sum_{j = 1}^{m_i} \log \pi [q_{ij} \{\textbf{B}_{ij}^\top \bm{\Theta}_0^\top (\bm{\xi}_{0i} + \beta_M \bm{u})\} ] - \sum_{j = 1}^{m_i} \log \pi \{q_{ij} (\textbf{B}_{ij}^\top \bm{\Theta}_0^\top \bm{\xi}_{0i}) \} \Big ] \nonumber \\
&\triangleq - \{ l(\bm{\Theta}_0, \bm{\xi}_{0i} + \beta_M \bm{u}) - l(\bm{\Theta}_0, \bm{\xi}_{0i}) \} \nonumber \\
&= - \beta_M \nabla^\top l (\bm{\Theta}_0, \bm{\xi}_{0i}) \bm{u} + \frac{M \beta_M^2}{2} \bm{u}^\top I(\bm{\Theta}_0, \bm{\xi}_{0i}) \bm{u} \{1 + o_p(1)\} \nonumber \\
&\triangleq \delta_1 + \delta_2. \nonumber
\end{align}
In specific, as
\begin{align}
|\delta_1| = |\beta_M \nabla^\top l (\bm{\Theta}_0, \bm{\xi}_{0i}) \bm{u}| \leq \beta_M \|\nabla^\top l (\bm{\Theta}_0, \bm{\xi}_{0i})\|_2 \|\bm{u}\|_2 = O_p(M^{1/2} \beta_M), \nonumber
\end{align}
$D_M(\bm{u})$ is dominated by $\delta_2$ with a sufficient large $D$. That means we have (\ref{theorygoalxi}) using a sufficient large $D$. Since $\|\wh{\bm{\xi}}_{i} - \bm{\xi}_{0i}\|_2 = O_p (\beta_M)$, $|\wh{\xi}_{ik} - \xi_{ik}| = O_p(\beta_M)$ for all $k = 1, \ldots, p$. The proof is completed.

\end{proof}

\bibliographystyle{unsrtnat}
\bibliography{ref}

\end{document}